\DeclareMathAlphabet{\mathcal}{OMS}{cmsy}{m}{n}
\newcommand{\papertitle}{Automated generation of attack trees with optimal 
	shape and labelling}
\newcommand{\sig}{\ensuremath{\Sigma}}
\newcommand{\treesize}{\ensuremath{size}}
\newcommand{\myarrow}[1]{\stackrel{#1}{\to}}
\newcommand{\powerset}[1]{\mathcal{P}(#1)}
\newcommand{\sosrule}[3]{{[\mathit{#1}]}\frac{#2}{#3}} %sos-rule
\newcommand{\abasicact}{\ensuremath{b}}
\newcommand{\ftop}{\ensuremath{top}} % auxiliary function top
\newcommand{\states}{\ensuremath{\mathcal{S}}}
\newcommand{\alphabet}{\ensuremath{\Lambda}}
\newcommand{\transitions}{\ensuremath{\myarrow{}}}
\newcommand{\startstate}{\ensuremath{s_{0}}}
\newcommand{\finalstates}{\ensuremath{F}}
\newcommand{\ltstuple}{(\states,\alphabet,\transitions,\startstate,\finalstates)}
\newcommand{\astate}{\ensuremath{s}}
\newcommand{\anatree}{\ensuremath{t}}
\newcommand{\rsubtree}{\ensuremath{\preccurlyeq}} %predicate subtree
\newcommand{\topev}{\mathit{top}}
\newcommand{\predset}{\mathbb{P}}
\newcommand{\anlts}{\ensuremath{l}}
\newcommand{\atrace}{\tau}  % a trace 
\newcommand{\tracesfunc}[1]{\ensuremath{\mathit{paths}(#1)}}  % a trace indexed
\newcommand{\attacksfunc}[1]{\ensuremath{\mathit{attacks}(#1)}}  % a trace 
\newcommand{\attackset}{\ensuremath{\mathcal{A}}}
\newcommand{\anattack}{\ensuremath{a}}
\newcommand{\agoal}{\ensuremath{g}}
\newcommand{\refOp}{\ensuremath{\triangleleft}}
\newcommand{\spSet}{\ensuremath{d}}
\newcommand{\loginuser}{\ensuremath{\mathtt{loggingIn}}} %log in
\newcommand{\exploiting}{\ensuremath{\mathtt{exploit}}} % running an exploit
\newcommand{\bruteforce}{\ensuremath{\mathtt{forcePsw}}}
\newcommand{\eavesdrop}{\ensuremath{\mathtt{listen}}} %connect via rsh
\newcommand{\treeop}{\Delta}   % Variable over OR, AND, SAND
\newcommand{\asem}{S}          % A generic semantics
\newcommand{\sat}{\vdash}      % General refinement function
\newcommand{\afact}{\anact}
\newcommand{\setfunction}[1]{\ensuremath{\afact_{pos}}}
\newcommand{\access}{\ensuremath{\mathit{as}}}
\newcommand{\credential}{\ensuremath{\mathit{c}}}
\newcommand{\dologin}{\ensuremath{\mathit{l}}}
\newcommand{\eavesdropuser}{\ensuremath{\mathit{eu}}}
\newcommand{\brute}{\ensuremath{\mathit{b}}}
\newcommand{\exploit}{\ensuremath{\mathit{x}}}
\newcommand{\wait}{\ensuremath{\mathit{w}}}
\newcommand{\eavesdropcon}{\ensuremath{\mathit{ec}}}
\newcommand{\anact}{\ensuremath{b}}
\newcommand{\users}{\ensuremath{U}} %set of users
\newcommand{\auser}{\ensuremath{u}} %set of users
\newcommand{\amachine}{\ensuremath{c}}
\newcommand{\servers}{\ensuremath{S}}
\newcommand{\aserver}{\ensuremath{serv}}
\newcommand{\server}{\ensuremath{s}}
\newcommand{\alice}{\ensuremath{\mathit{Bob}}}
\newcommand{\mallory}{\ensuremath{\mathit{Eve}}}
\newcommand{\passwords}{\ensuremath{\mathit{P}}}
\newcommand{\apassword}{\ensuremath{\mathit{psw}}}
\newcommand{\psw}{\ensuremath{\mathit{pwd}}}
\newcommand{\located}{\ensuremath{\mathtt{loggedin}}}
\newcommand{\knows}{\ensuremath{\mathtt{knows}}}
\newcommand{\stores}{\ensuremath{\mathtt{accepts}}}
\newcommand{\accepts}{\ensuremath{\mathtt{accepts}}}
\newcommand{\reducespacingalg}{\linespread{1.2}\selectfont}
\newcommand{\OR}{\ensuremath{\mathtt{OR}}}
\newcommand{\AND}{\ensuremath{\mathtt{AND}}}
\newcommand{\SAND}{\ensuremath{\mathtt{SAND}}}
\newcommand{\basicact}{\ensuremath{\mathbb{B}}}
\newcommand{\sandtree}[1]{\ensuremath{\mathbb{T}_{#1}}}
\newcommand{\power}[1]{\ensuremath{\mathcal{P}(#1)}}
\newcommand{\sem}[1]{[\hspace{-0.05cm}[#1]\hspace{-0.05cm}]_{\mathcal{S\!P}}}
\newcommand{\spgraphset}{\ensuremath{\mathcal{G}}}
\newcommand{\seqop}{\cdot} %Sequential operatior for SP-graphs
\newcommand{\genseqop}{\seqop} %Sequential operatior for SP-graphs
\newcommand{\genparop}{\parop} %Sequential operatior for SP-graphs
\newcommand{\parop}{\parallel}
\newcommand{\spset}[1]{\ensuremath{\mathbb{G}_{#1}}}
\newcommand{\figurespath}{./fig}
\newtheorem{theorem}{Theorem}[section]
\newtheorem{corollary}{Corollary}[section]
\newtheorem{proposition}{Proposition}[section]
\newtheorem{remark}{Remark}[section]
\newtheorem{definition}{Definition}[section]
\newtheorem{example}{Example}[section]
\begin{document}
	
	\begin{frontmatter}

	\title{
		\papertitle
		}

		\author[inst1]{Olga Gadyatskaya}
		
		\affiliation[inst1]{organization={Leiden University},
			city={Leiden},
			country={The Netherlands}
		}
		
		\author[inst2]{Sjouke Mauw}

		\affiliation[inst2]{organization={DCS, University of 
		Luxembourg},
			city={Esch-sur-Alzette},
			country={Luxembourg}
		}
		
		\author[inst3]{Rolando Trujillo-Rasua \corref{cor1}\fnref{label2}}
		
		\affiliation[inst3]{organization={Universitat Rovira i 
		Virgili},
			city={Tarragona},
			country={Spain}
		}
		
		\fntext[label2]{Rolando Trujillo-Rasua is funded by a Ramon y Cajal 
		grant from the Spanish Ministry of Science and Innovation and the 
		European Union (REF: RYC2020-028954-I). He is also supported by the 
		project HERMES funded by INCIBE and the European Union 
		NextGenerationEU/PRTR. }
		\cortext[cor1]{Corresponding author}
		
		\author[inst4]{Tim A. C. Willemse}
		
		\affiliation[inst4]{organization={Eindhoven University of 
		Technology},
			city={Eindhoven},
			country={The Netherlands}
		}

		\begin{abstract}

This article addresses the problem of automatically generating attack trees that soundly and clearly describe
the 
ways the system can be attacked.
% The problem this article addresses is, given a formal specification of 
% a system, how to produce an attack tree that correctly and clearly describes 
% the 
% ways the system can be attacked. 
Soundness means that the attacks displayed 
by the attack tree are indeed attacks in the system; clarity means that the 
tree is efficient in communicating the attack scenario. 
To pursue clarity, we introduce an attack-tree generation 
algorithm that minimises the tree size and the information length of its 
labels without sacrificing correctness. 
We achieve this by i) introducing a system model that allows to reason about attacks and goals in an efficient manner, and ii)  by
establishing a 
connection between the 
problem of factorising algebraic expressions and the problem of minimising the 
tree size. 
To the best of our knowledge, we introduce the first attack-tree generation framework
that optimises the labelling and shape of the generated trees, while 
guaranteeing their soundness with respect to a system 
specification. 
% Notably, our generation algorithm can handle 
% complex 
% attacks that execute actions in 
% parallel and sequentially. 
% For completeness, we introduce a system model that integrates well with our 
% generation approach, and validate the resulting framework via a running 
% example. 

		\end{abstract}

	\end{frontmatter}

	\section{Introduction}
\label{sec:introduction}

Attack trees were introduced by Bruce Schneier \cite{schneier1999attack} in 
1999 as a 
graphical method to evaluate and analyse threats. 
Like mind maps, they are meant to break down a goal into 
smaller, more 
manageable subgoals. 
The root 
node of an attack tree 
denotes the attacker's goal; the root node's children establish how that
goal can be refined into a set of smaller subgoals; and so on.
Because of their potential to help a 
security analyst to 
obtain 
an overview 
of the
system vulnerabilities, facilitate 
communications
across the board, and succinctly store very complex threat
scenarios~\cite{Ford-PoEM-2016}, they have gained widespread 
recognition in 
industry 
and academia alike~\cite{shostack2014threat}. 

There exist two fundamental 
ways of refining a goal: by a 
conjunctive refinement expressing that all subgoals are necessary to meet the 
parent goal, or by a disjunctive refinement expressing that a 
single subgoal is sufficient to meet the 
parent goal. 
Both types of refinement are illustrated in 
Figure~\ref{fig:example}, which displays an attack tree example with the 
following high-level description. The 
goal of the attacker (root node) is to gain unauthorised access to a server. 
To do so, the 
attacker must first get a suitable credential for the server, 
and then use the credential to log in remotely. This is expressed as
a conjunctive 
refinement, denoted graphically by an arrow crossing or connecting the edges 
that link a 
parent node with its children. 
A suitable credential can 
be 
obtained by eavesdropping on communications of an honest user who knows 
the server password. Alternatively, the attacker 
can bruteforce the password on the server or use an exploit to create 
a new password. This constitutes a disjunctive refinement, denoted
graphically by the absence of an arrow.

\begin{figure}[t]
	\centering
	\includegraphics[width=0.80\textwidth]{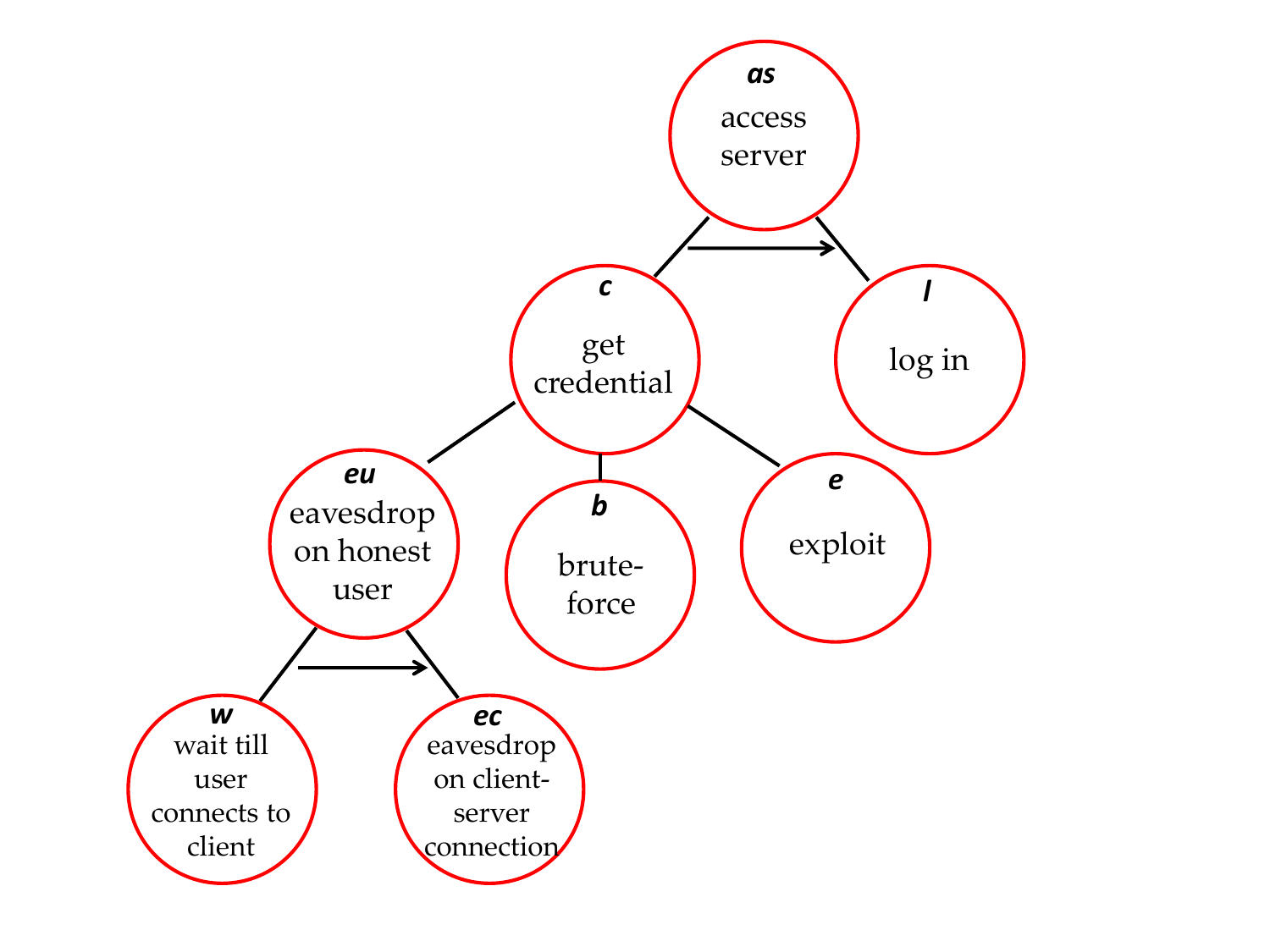}
	
	\caption{An attack tree example. 
	}
	\label{fig:example}
\end{figure}

Notice how the attack tree example has been built over hidden or subjective 
assumptions about the attacker capabilities and the attacked system, making it 
impossible 
to determine how faithfully the attack-tree model captures the most relevant 
attack 
scenarios. In practice, confidence in 
the developed attack-tree models heavily depends on the expertise of the 
security team who builds them. That makes the process of creating an
attack tree time-consuming, resource-intensive, tedious, and
error-prone~\cite{Ford-PoEM-2016,shostack2014threat}. 

Recent academic work 
\cite{Audinot2017,Audinot2018,GJMTW17,Pinchinat2016,Vigo-CSF-2014,Ivanova-GraMSec-2015,
	hong2013scalable,Pinchinat2016,Gadyatskaya-GraMSec-2015,widel2019beyond,Bryans2020,pinchinat2020library,Kern2021}
advocates for 
automated attack-tree 
generation instead of 
building attack trees by hand. 
Their approach, in general,  
consists of three steps:
\begin{enumerate*}[label=(\roman*)]
\item specify the system under analysis,
\item use model checking or other formal methods to find attacks on
  the system, and
\item use automatic means to represent those attacks within an
  attack-tree model. 
\end{enumerate*}
In this case, 
assurances about the completeness and soundness of the attack tree 
come with 
mathematical rigour by resorting to 
state-of-the-art model-checking techniques that    
can systematically and exhaustively explore various combinations of attacks and
system behaviour. The first two steps in this approach benefit from existing
knowledge of system specification and verification. The last step, however, 
that of generating an easy-to-comprehend attack tree out of a system 
specification or set of 
attacks, is still largely an open problem. 

Indeed, because attack trees are meant to be used as part of a threat 
modelling~\cite{shostack2014threat} or risk assessment 
methodology~\cite{paul2014towards}, their communication power (i.e.\ their 
ability to be understood by 
security analysts) is 
as 
important as their soundness and completeness (i.e.\ their ability to faithfully 
capture the most 
relevant threats to the system). 
Notably, attack trees are expected to help the analyst in understanding the
attack 
scenario at different \emph{levels of abstraction} by a simple top down 
inspection of 
the tree. 
In Figure \ref{fig:example}, for example, the first level tells the 
analyst that a credential and a terminal to login is sufficient to access the 
server, but it does not specify how to obtain such a credential. 
The deeper one goes, however, the more details one will find on the attack.

\tikzstyle{arrow} = [thick,->,>=stealth]
\tikzstyle{startstop} = [rectangle, rounded corners, minimum width=2cm, minimum 
height=1cm,text centered, draw=black, fill=green!30]

To make attack trees comprehensible, one has to (at least)
give meaning to their labels. 
Earlier generation approaches that produce trees without labels for 
intermediate-nodes, such as 
\cite{Vigo-CSF-2014}, are clearly unsuitable for a top-down reading.
There exist two complementary approaches in the literature to produce trees 
with 
meaningful labels. The 
first one consists of using 
attack-tree templates \cite{Bryans2020}, 
libraries \cite{Jhawar2018}, or refinement 
specifications \cite{GJMTW17} to guide the generation process. This can help to 
improve both hand-made and computer-generated trees, but it faces the problem 
of creating a comprehensive knowledge base.   
The second approach 
consists of 
providing a formal language to express attack tree goals/labels 
\cite{Mantel2019,Audinot2016,Audinot2017,Audinot2018} and mechanisms to check 
whether 
those goals are consistent with respect to a system specification. 
We deem this necessary,
given the prominent role labels play in the comprehension of an 
attack tree, 
but we have found no 
attack-tree generation framework in the literature capable of producing trees 
with labels expressed within a formal language. 
This article introduces the first such framework.

\noindent \emph{Contributions.}
In this article we introduce an attack-tree generation framework (sketched in 
Figure \ref{fig-our-approach})
that formally establishes the relationship between system specification, 
attacks, goals, and attack trees. 
Like in previous approaches towards automated attack-tree generation, our 
framework starts from a formal specification of the system under analysis and 
relies on formal methods to obtain a (comprehensive) list of attacks against 
the system. Our framework, in addition, requires 
attack
goals to be formally 
derived from the system specification. This allows us to, given an attack-goal 
relation obtained from a system specification, produce attack 
trees with optimised labels and shape.

\begin{figure*}
	\begin{center}
		\begin{tikzpicture}[node distance=3.5cm]
			\node (system) [startstop] {System model};
			\node (attacks) [startstop, right of=system, yshift=1.5cm]{Attacks};
			
			\node (middle) [right of=system]{};
			
			\node (goalsystem) [startstop, right of=system, 
			yshift=-1.5cm]{Goals};
			\node (atree) [startstop, right of=goalsystem, yshift=1.5cm] 
			{Attack 
				Trees};
			\draw [arrow] (system) -- node[anchor=south, sloped] 
			{\small{find}} (attacks);
			\draw [arrow] (system) -- node[anchor=south, sloped] 
			{\small{define}} (middle);
			\draw [arrow] (attacks) -- node[anchor=south, sloped] 
			{\small{heuristic}} (atree);
			\draw [arrow] (system) -- node[anchor=south, sloped] 
			{\small{define}} (goalsystem);
			\draw [arrow, <->] (goalsystem) -- node[anchor=south, sloped] 
			{\small{consistent}} (atree);
			\draw [arrow, <->] (attacks) -- node[anchor=south, sloped] 
			{\small{relation}} (goalsystem);
		\end{tikzpicture}
	\end{center}
	\caption{Our attack tree generation approach, where goals, attacks, and 
		their relation are passed on to the generation algorithm. 
		\label{fig-our-approach}}
\end{figure*}
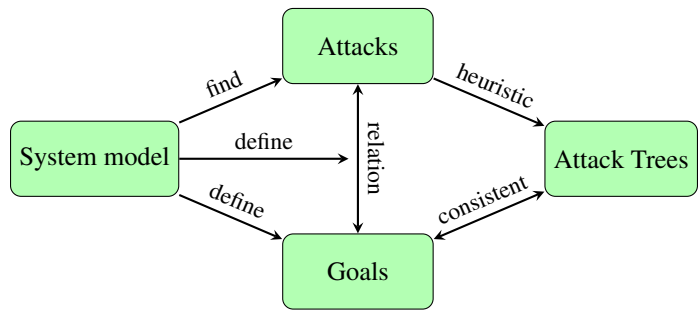

The attack-tree generation framework introduced in this article unifies various 
features that 
appear scattered in the 
literature, namely,

\begin{itemize}
	\item \emph{Flexibility and extensibility 
		\cite{pinchinat2020library,GJMTW17}: }
	By 
	defining an interface between system models and the attack generation 
	problem, our framework allows for 
	plugging in any 
	system model that implements such 
	interface. In our case, the interface considers attacks expressed as 
	series-parallel graphs \cite{JKMR15}, which allows for attack steps to be 
	executed in parallel and sequentially.
	\item \emph{Correctness checking		
		\cite{Mantel2019,Audinot2016,Audinot2017,Audinot2018,
			pinchinat2020library,GJMTW17}: } The framework 
	uses a formal language 
	to produce goals/labels for the tree, which aids their 
	comprehension and gives goals an unambiguous meaning, and establishes 
	whether an attack tree is sound and 
	correctly-labelled with respect to the 
	output of a 
	system specification. 
	\item \emph{Attacks have well-defined goals: }  
	We require attacks to tell which goals they achieve. This 
	has been implicitly assumed in most previous work, but only recently 
	formalised in \cite{Mantel2019}. 
\end{itemize}

In addition, our framework introduces the first \emph{optimality condition} 
for attack tree labels, requiring generation approaches to produce labels that 
convey the minimum necessary information for the analyst to \emph{understand} 
the tree. 
In a nutshell, we define the \emph{meaning} of an attack goal or label as the 
set of 
attacks that achieve the goal. Such meaning is assumed to be prior knowledge 
for the analyst (extracted from the system specification) and thus is independent from the attack-tree generation 
process. 
Then, given an attack tree 
$\anatree$ with root goal 
$\agoal$, we require the meaning of $\agoal$ to be captured by $\anatree$ as 
\emph{tightly} as possible. 
We shall make this more precise in Section 
\ref{sec-generation-problem}.

Another unique characteristic of our framework is the use of graph properties 
as optimisation criteria to shape the generated trees. 
Recognising that the structure of the tree, i.e.\ its nodes and 
connections, impacts the way it is traversed via a top-down reading, we find it 
important that attack-tree generation approaches provide an optimality
criterion for 
their shape. This gives an objective measure to compare different generation 
approaches.  
In our method, we seek 
to produce trees with minimum size. Intuitively, the smaller the tree, the 
fewer 
nodes the 
analyst needs to scrutinise to grasp the attack scenario described by the tree. 

We show instances of the
attack-tree generation 
problem that reduce to the set cover problem. 
To minimise the size of attack trees, we formalise the algebraic properties 
of 
the 
attacks used as input to the attack-tree generation problem, and derive an 
algebraic factorisation procedure to guide the minimisation of the tree size.
This shows
an interesting
connection between the problem of minimising attack 
trees and the problem of minimising algebraic expressions; the latter is often 
done via algebraic factorisation.  
% The resulting attack-tree generation algorithm (described in detail in Section 
% \ref{sec-factorisation}) is, to the best of our 
% knowledge, the first approach capable of 
% synthesising attacks with actions executed in parallel and sequentially. 

For completeness, we introduce in Section \ref{sec-system} a system model and 
its 
corresponding interface with the attack-tree generation problem. Notably, we 
show how to obtain attack-tree labels out of the system model and how those 
labels relate to attacks. 

All in all, this article introduces the first attack-tree generation method 
that optimises the labelling and shape of the generated trees, while 
guaranteeing their soundness and correctness with respect to a system 
specification.

\noindent \emph{Structured of the article.}
In Section \ref{sec-literature} we discuss the literature on automated attack tree generation and highlight the improvements we make upon previous work. In Section \ref{sec-generation-problem} we define the attack tree generation problem as an optimisation problem where size of the tree and information lost are the variables to be minimised. Section \ref{sec-system} introduces the first component of our attack tree generation framework: a system model from where to extract attacks and their goals in a succint way. Section \ref{sec-factorisation} introduces a heuristic to minimise the attack tree size via leveraging algebraic factorisation techniques.

\section{Related work}
\label{sec-literature}
System-based approaches towards attack-tree generation start from a 
formal specification of the system under analysis, 
typically in the form of a graph describing the system assets, actors, 
access control rules, etc.,  
and produces an attack tree describing how an attacker can lead the system 
into an undesired state. 
That is to say, these approaches translate a specification in a 
system modelling language, such as a labelled transition system, 
into a specification in the attack-tree model. 

Earlier attack-tree generation approaches attempted to make a direct 
model-to-model translation. For example, 
Vigo et al.~\cite{Vigo-CSF-2014} generate trees from a process
calculus system model by translating algebraic specifications into
formulae and backward-chaining these formulae into a formula for the
attacker's goal success. Reachability-based approaches, such as
\cite{Ivanova-GraMSec-2015,Gadyatskaya-GraMSec-2015,hong2013scalable}, 
transform system models
into attack trees using information about connected elements in the
model. In essence, these approaches reason that the attacker can reach
the desired location from any system location adjacent to it. This
reasoning is applied recursively to traverse complete attack paths. Dawkins and 
Hale 
\cite{dawkins2004systematic} have generated attack
trees from network attack graphs (a formalism different from attack
trees \cite{ShHaJhLiWi}) by finding minimum cut sets for successful
attack paths (traces). 
The main drawback of these approaches is that the 
connection between system 
and attack-tree model is made only informally via the generation algorithm.

Later, a series of works on attack-tree generation 
\cite{Audinot2017,Audinot2018,GJMTW17,Pinchinat2016} 
provided simple and elegant soundness and completeness properties for 
the generation algorithm. Given an attack-tree semantics, such as the one 
introduced by Mauw and Oostdijk \cite{MO2006}, an attack-tree generation 
algorithm is sound if all attacks in its semantics are attacks in the system 
specification. In other words, these works use \emph{attack traces} 
as an intermediate representation or interface for both system and attack-tree 
models. Our work follows a similar approach. 
The 
task for the system specification is to find attacks; the task for the 
generation 
algorithm is to represent or synthetise those attacks within an attack-tree 
model.  
Such separation of duties is not only useful to define soundness, but 
also 
to compare and integrate 
different generation approaches. 

In practice, attack trees are seldom used in isolation, though, but as part of 
a 
risk assessment 
methodology. They are meant to succinctly represent an attack scenario, notably 
from the point of view of the attacker, and to aid the comprehension and 
evaluation 
of threats against a system. 
This means that the communication power of an attack tree is as important as 
its soundness. Interestingly, because most attack-tree semantics ignore the 
labels of the intermediate nodes in the tree \cite{Jhawar2016,MO2006},  
soundness is insufficient to guarantee that attack trees are labelled in some 
meaningful way. 

An approach to produce attack trees with meaningful labels is by using 
templates 
or libraries obtained from threat databases 
\cite{Bryans2020,Jhawar2018,GJMTW17,pinchinat2020library,groner2023model}. Both 
the attack tree 
labelling and shape 
are determined, or highly influenced, by a knowledge base on how threats to 
systems are modelled as attack trees. 
For example,  
the ATSyRA approach \cite{Pinchinat-WFMDS-2014,pinchinat2016atsyra}
requires that the
analyst first defines a set of actions at several abstraction levels
in the system model, and a set of rules for refinement of higher-level
actions into combinations of lower-level ones. This action hierarchy
allows to transform successful attack paths into an attack tree, containing 
precise
actions as leaf nodes, while intermediate nodes represent more
abstract actions. This tree enjoys a refinement structure that is more
familiar to the human analyst, but the analyst still has to define the
refinement relation herself. 
In fact, there does 
not exist a comprehensive knowledge base with libraries, templates or 
refinement 
specifications available for these approaches to build attack trees.
Our approach does not need a preestablished library of attack templates to produced labels with meaning, as we extract attack goals and their meaning from the system specification. 

Another line of work starts from the observation that, to obtain meaningful 
labels, one has to formally define what a label is and how it relates to the 
system 
specification. On that direction, 
Audinot et al.\ \cite{Audinot2018} 
establish a robust link between the system 
states and leaf nodes of the tree, with the goal of guiding the analyst on 
refining \emph{useful} tree nodes only. 
Their 
labels are not constrained to be system actions, but reachability
goals defined as words over a set of propositions, which has some similarities 
to our work. 
In \cite{Audinot2017}, labels of their trees are goals 
formulated in terms of 
pre-conditions and post-conditions over the possible states of the system. 
These works, however, do not propose a generation algorithm.

Mantel and Probst \cite{Mantel2019} go one step further. They claim that 
establishing whether an 
attack trace satisfies a given goal is essential to producing trees with 
meaning. We agree with that observation and build our framework around system 
models that can link attack traces and goals precisely. 
Mantel and Probst introduce a formal language for specifying traces and 
attacker goals. They propose different ways to interpret  a 
satisfaction of a goal. For example, a goal may be satisfied by a trace if at 
some point of the trace the goal is satisfied. Or one could use a stronger 
variant: a goal is satisfied if after a certain point of the trace the goal is 
always 
satisfied. Their work focuses on providing a template for defining success 
criteria of 
different flavours, rather than on the generation of attack trees. We, instead, 
focus 
on the attack-tree
generation problem, which can be instantiated with any of the success 
criteria mentioned in \cite{Mantel2019}.

We conclude by highlighting that there also exist approaches to the generation 
of attack trees from informal 
system models, such as \cite{Kern2021}. These approaches are very much in the spirit of the 
attack-tree methodology as suggested by Schneier \cite{schneier1999attack} 
back in 1999. However, as a generation mechanism, they lack formal guarantees of 
correctness. For informal approaches to attack-tree generation we refer the 
reader to \cite{konsta2023survey}; our focus instead is on methodologies whose 
properties can be rigorously analysed. 

\section{The attack-tree generation problem with optimal labelling and 
shape}\label{sec-generation-problem}

The goal of this section is to formalise the attack-tree generation problem as 
an optimisation problem where the attack tree labelling and shape are the 
objective functions. Hence we shall formally introduce the grammar and 
semantics of attack trees, 
their underlying graph structure, their labelling  and, lastly, the attack-tree 
generation problem.

\subsection{$\SAND$ attack trees}
An attack tree defines how higher (\emph{parent}) nodes are
interpreted through lower (\emph{child}) nodes. The interpretations are
defined by the refinement operators: $\OR$ specifies that if any of
the child nodes is achieved, then the immediate parent node is also
achieved; while $\AND$ defines that all child nodes need to be achieved
to achieve the parent node's goal \cite{MaOo}. We will consider also
the sequential $\AND$ operator, or $\SAND$, that demands that the goals
of the child nodes are to be achieved in a sequential order for achieving
the parent node \cite{JKMR15}.

\subsubsection{Syntax.}
Formally, let $\basicact$ denote a set of \emph{node labels}. Let  
$\OR$ and $\AND$ be unranked\footnote{An unranked operator take as argument an arbitrary number of terms.} 
associative and commutative 
operators, and $\SAND$ be an unranked 
associative but non-commutative
operator. 
A \SAND{} attack tree $\anatree$ is an expression over 
the signature $\basicact\cup\{\OR, \AND, \SAND\}$, generated by the 
following grammar (for $\abasicact\in\basicact$):

\[\anatree ::= \abasicact \mid
\abasicact \refOp \OR(\anatree,\ldots,\anatree)
\mid
\abasicact \refOp \AND(\anatree,\ldots,\anatree)
\mid
\abasicact \refOp \SAND(\anatree,\ldots,\anatree)
\text{.}
\]

When drawing $\SAND$ attack trees, like in Figure~\ref{fig:example}, we will 
represent 
the $\AND$ operator by a
bar connecting the edges to the node's children,
the $\SAND$ operator by an arrow
instead of a bar, and the $\OR$ operator by the absence of a
connector. 
Due to the commutativity of the
$\OR$ operator, the order in which an $\OR$ node is connected to its children 
is immaterial;
likewise for the $\AND$ operator. Note that the order of children for $\SAND$ 
nodes remains relevant and is given by the direction of the arrow.

We use 
$\sandtree{\basicact}$ 
to denote all \SAND{} attack trees generated by 
the
grammar above, given a set of labels $\basicact$. Different from the 
definition of \SAND{} trees given
in~\cite{JKMR15}, we require every node in the tree to be annotated
with a label. A label in a node typically provides a generic
(sometimes informal) 
description of the type of attack, e.g. \emph{get
	a user's credentials} or \emph{impersonate a security guard}, which is
helpful for a top-down interpretation of the tree. 
An expression like
$\abasicact \refOp \SAND(\anatree_1,\ldots,\anatree_n)$
denotes an attack tree of which the top node is labelled with 
$\abasicact$, and which has $n$ children $\anatree_1, \ldots,
\anatree_n$ that have to be achieved sequentially. 

\begin{example}\label{example:human}
	Figure~\ref{fig:example} illustrates a $\SAND$ attack tree in which  
	the goal of the attacker is to gain unauthorised access to a server. 
	Using shorthands for labels, this tree can be represented by
	the following expression:

	$
	\access\refOp\SAND(
	\credential\refOp\OR(
	\eavesdropuser\refOp\SAND(
	\wait,
	\eavesdropcon),
	\brute,
	\exploit
	),
	\dologin)
	$.
	
\end{example}

We define the auxiliary function $\ftop$ to obtain the label at the root node
as follows (for $\treeop \in \{\OR, \AND, \SAND\}$):
\[ 
\topev(\abasicact) =
\topev(\abasicact \refOp \treeop(\anatree,\ldots,\anatree)) =
\abasicact\text{.}
\]

Next, we briefly introduce useful notations about the structure of attack 
trees. 
We say that $t'$ is a \emph{subtree} of $t = b \refOp \treeop(t_1, \ldots, 
t_n)$, 
denoted $t' \rsubtree t$, if $t' = t$ or 
$t' \rsubtree t_i$ for some $i \in 
\{1, 
\ldots, n\}$, where $\treeop \in \{\OR, \AND, \SAND\}$. 
If $t$ is a simple tree, i.e.\ a tree with a single node, 
then $t$ itself is a subtree of $t$. 
When the subtree $t'$ is not equal to $t$, we say that $t'$ is a \emph{proper 
	subtree}.
We use the auxiliary function $\treesize(.)$ to denote the size of an attack 
tree 
measured in terms of its number of nodes. A recursive definition of the 
$\treesize$ function is as follows: 
$\treesize(b) = 1$ if $b 
\in \basicact$; $\treesize(\abasicact \refOp \treeop(t_1, \ldots, t_n)) = 
\treesize(t_1) + \dots 
+ \treesize(t_n)+1$ when $\treeop \in \{\OR, \AND, \SAND\}$.

\subsection{The SP semantics}
\label{sec-sp-semantics}
Because attack trees of different shape may well represent the same 
\emph{attack scenario}, researchers have been providing attack trees with
precise and unambiguous interpretations~\cite{Mauw-ICISC-2005,KoMaRaSc,JKMR15}. 
This makes it possible to guarantee consistency of analyses performed 
on attack trees and to determine whether two apparently different trees do 
cover the same attacks. In this article, we use the 
semantics (a.k.a.\ interpretation) of $\SAND$ attack trees
given in \cite{JKMR15}, known as 
\emph{the SP semantics}.

The SP semantics encodes an attack tree as a set of
\emph{Series-Parallel graphs} (SP 
graphs). An SP graph is an edge-labelled directed graph with a \emph{source} 
vertex and a \emph{sink} vertex. The simplest SP graph has the form 
$u \xrightarrow{b} v$, where $b$ is an edge label, $u$ is the \emph{source} 
vertex because it has no incoming edges, and $v$ is the \emph{sink} vertex 
because it has no outgoing edges. Any other SP graph is obtained as the 
composition of single-edge SP graphs. 

Two composition operators are used to build SP graphs: the sequential 
composition operator ($\seqop$) and the parallel composition operator 
($\parop$). A sequential composition joins the sink vertex of a graph with the 
source vertex of the other graph. For example, given $G = u \xrightarrow{b} v$ 
and $G' = x \xrightarrow{z} y$, we obtain that $G \seqop G' = u \xrightarrow{b} 
v \xrightarrow{z} y$. Note that the source vertex of $G'$ has been replaced in 
$G \seqop G'$ by the sink vertex $v$ of $G$. 
A parallel composition, instead, joins the source vertices of both graphs and
joins the sink vertices of both graphs. For example, given $G = u 
\xrightarrow{b} v$ 
and $G' = x \xrightarrow{z} y$, the parallel composition $G \parop G'$ gives 
the 
following SP graph. It is worth remarking that whether vertex $u$ or $x$ is 
preserved in a parallel 
composition is irrelevant as far as the SP semantics is concerned. 

\begin{center}
	\begin{tikzpicture}
		usetikzlibrary{shapes}
		\node (v1) at (0,0) [label=left:$u$] {};
		\node (v2) at (1,0) [label=right:$v$] {};\
		
		\draw [->, bend left=30] (v1.60) to node [above] {$b$} (v2.120);
		\draw [->, bend right=30] (v1.300) to node [below] {$z$} (v2.240);
	\end{tikzpicture}
\end{center}

Both composition operators are extended to sets of SP graphs as follows: 
given sets of SP graphs $\spSet_1, \ldots, \spSet_k$,
\[
\def\arraystretch{1.3}
\begin{array}{l}
	\spSet_1 \genparop \dots \genparop \spSet_k  =
	\{ G_1 \parop  \dots \parop G_k \ \mid \ (G_1, ..., G_k) \in 
	\spSet_1 \times 
	... \times \spSet_k
	\}\\
	\spSet_1 \genseqop \dots \genseqop \spSet_k =
	\{ G_1  \seqop \dots \seqop G_k\ \mid \ (G_1, ..., G_k) \in \spSet_1 \times 
	... \times \spSet_k
	\}.\\
\end{array}
\]

We write $\xrightarrow{b}$ for the graph with a single edge labelled with 
$b$ and define SP graphs as follows. 
\begin{definition}[Series-Parallel graphs] \label{def:sp-graph}
The set $\spset{\basicact}$ of \emph{series-parallel graphs} (SP graphs) over 
$\basicact$ is defined inductively by the following two rules:
\begin{itemize}
\item For $b\in \basicact$, $\xrightarrow{b}$ is an SP graph.
\item If $G$ and $G'$ are SP graphs, then so are 
$G\seqop G'$ and  $G\parop G'$.
\end{itemize}
\end{definition}

\begin{definition}[The SP semantics] \label{def:sp-sem}
	Let $\spset{\basicact}$ denote the set of SP graphs labelled with the 
	elements of  $
	\basicact$. 
	The \emph{SP semantics} for $\SAND$ attack trees is given by the function 
	$\sem{\cdot}: \sandtree{\basicact} \to \power{\spset{\basicact}}$, where $\powerset{\cdot}$ denotes the powerset,
	which is defined recursively as follows:
	for $b\in\basicact$, $t_i \in \sandtree{\abasicact}$, $1\leq i \leq k$, 
	\[
	\def\arraystretch{1.3}
	\begin{array}{l}
		\sem{b} = \{\xrightarrow{b}\}\\
		\sem{\OR(t_1, \dots, t_k)} = \bigcup_{i=1}^{k}{\sem{t_i}}\\
		\sem{\AND(t_1, \dots, t_k)} = \sem{t_1} \genparop ... \genparop 
		\sem{t_k}
		\\
		\sem{\SAND(t_1, \dots, t_k)} = \sem{t_1} \genseqop ... \genseqop 
		\sem{t_k}.\\
	\end{array}
	\]
\end{definition}

We use $\anatree =_{\asem} \anatree'$ to denote semantic equivalence of two 
trees $\anatree, \anatree'\in\sandtree{\basicact}$, which is defined by 
$\anatree =_{\asem} \anatree' \iff \sem{\anatree} = \sem{\anatree'}$.

\begin{example}\label{example:sp-semantics}
	The $\SAND$ attack tree in Figure~\ref{fig:example} has the following SP
	semantics: $\{\xrightarrow{w}\xrightarrow{ec}\xrightarrow{l},
	\xrightarrow{b}\xrightarrow{l}, \xrightarrow{x}\xrightarrow{l}\}$.
	
	Note that, for simplicity, the labels of the graph nodes are not 
	represented in the
	SP semantics. Further note that the SP graphs occurring in this
	example are linear traces because the tree has no $\AND{}$ nodes.
\end{example}

\subsection{The underlying graph structure of attack trees}
\label{sec-shape}

Figure~\ref{fig:flat-tree} shows an attack tree whose semantics is also 
$\{\xrightarrow{w}\xrightarrow{ec}\xrightarrow{l},
\xrightarrow{b}\xrightarrow{l}, \xrightarrow{x}\xrightarrow{l}\}$. That is, 
the attack tree in Figure~\ref{fig:flat-tree} is semantically equivalent to 
the attack tree in Figure~\ref{fig:example}. Their difference lies in their 
underlying graph structure. One tree is flat and does not differ from a mere 
list of 
attacks. The other is deeper and provides potentially useful information in 
the way its nodes are refined, such as the fact that a credential can be 
obtained by three different means: brute-forcing, eavesdropping and 
exploiting a vulnerability. This illustrates that generating 
an attack tree with a given semantics can be trivial. The challenge is to 
generate trees whose refinements contribute to clarity and comprehensibility.

\begin{figure}
	\centering
	\includegraphics[width=1.0\textwidth]{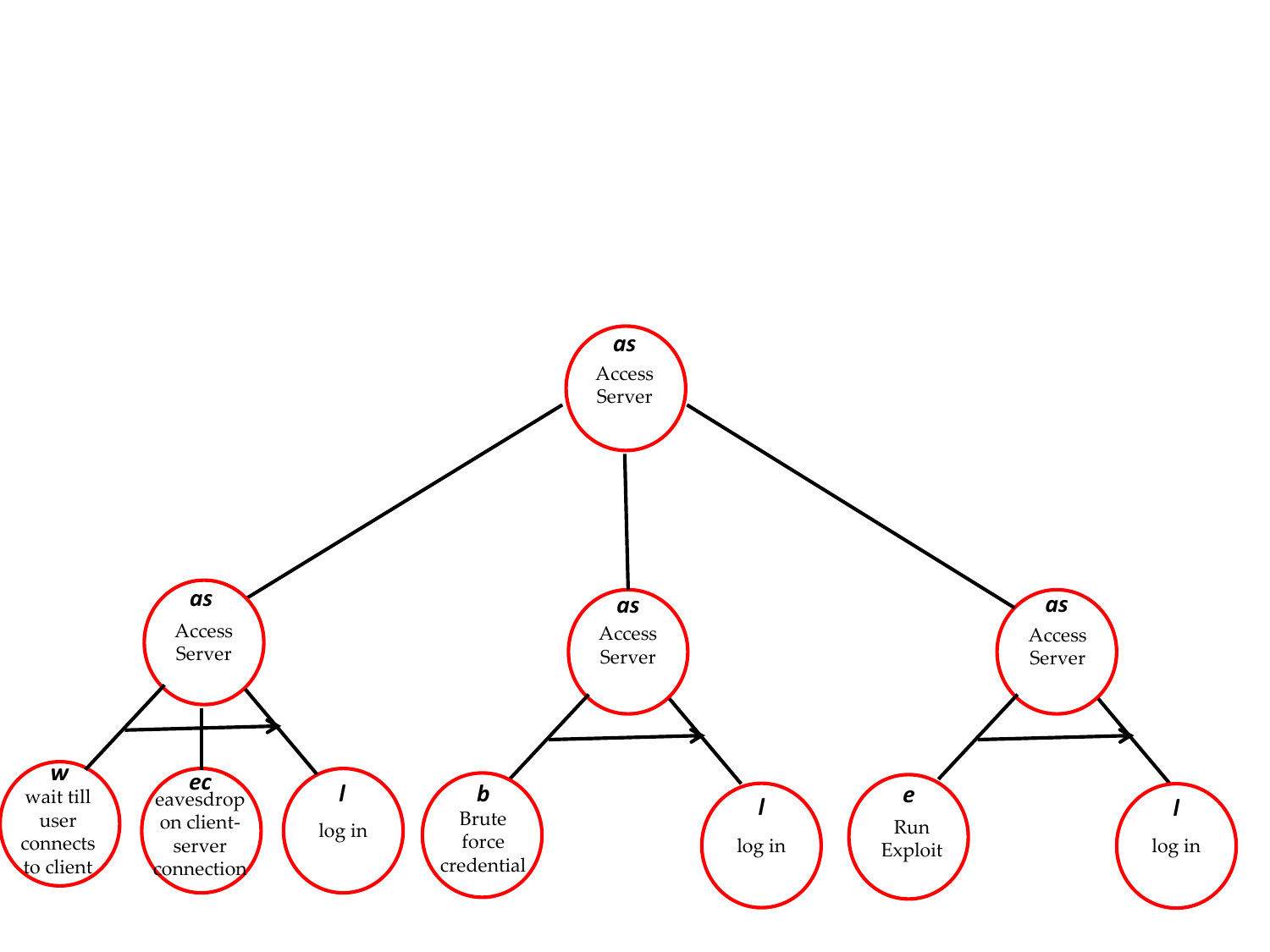}
	\caption{Another way to represent the attack scenario depicted in 
		Figure~\ref{fig:example}. 
		\label{fig:flat-tree}}
\end{figure}	

Existing generation approaches, in general, are not explicit about the graph 
properties of the attack tree they produce. We argue that this is problematic, 
because it thwarts a much needed comparison based on objective measures amongst 
existing 
synthesising 
procedures. While readability and comprehensibility 
might be subjective, graph 
properties are not. In this article we propose a generation approach that seeks 
to minimise the size of the tree, as it represents a simple objective measure 
on the 
number of labels an analyst needs to parse to understand the whole attack 
scenario. 
We remark that if generation approaches to come argue for other structural measures 
for the underlying graph 
structure of the attack tree, such as their depth or branching factor, 
they can be easily accommodated within our 
formulation of the attack-tree generation problem, as we show at the end of this section.

\subsection{The meaning of attacks and goals}
\label{sec-labels}

Like in \cite{Mantel2019}, we assume attacks are not 
arbitrary, or, even if they are, they achieve some goal. Hence we assume the 
existence of a total relation $\sat \subseteq \spset{\basicact} \times 
\basicact$ 
expressing whether an attack satisfies a goal.
The domain of the goal relation follows from the $\SAND$-attack-tree grammar, 
which  
defines an attacker goal as a label in 
$\basicact$, and the SP semantics, which defines an attack as an SP graph over 
$\basicact$. 
To illustrate the concept of the goal relation let us look at the attack tree example in Figure \ref{fig:example}. Consider the attack $\xrightarrow{x}\xrightarrow{ec}$ depicted at the bottom-left of the tree, which means that the attacker waits for the user to connect to the client and then eavesdrops the user's credential. Such an attack accomplishes the goal of learning the user's credential. We express this relationship formally by $\xrightarrow{x}\xrightarrow{ec} \sat c$. 

We observe that existing attack-tree generation procedures have silently 
assumed a 
relation between attacks and goals. As we just illustrated, given a tree 
$\anatree$ with root goal
$\ftop(t)$, it is expected that all attacks captured by $\anatree$ satisfy the 
root goal $\ftop(t)$. 
We make that expectation explicit by characterising when an attack tree is 
\emph{correctly-labelled}. 

\begin{definition}[Correctly-labelled tree]
	We say that an attack tree $t$ is \emph{correctly-labelled}, with respect 
	to a goal relation $\sat$, if every subtree $t'$ 
	in $t$ satisfies 
	that 
	$\anattack \in 
	\sem{t'} \implies \anattack \sat \ftop(t')$.
\end{definition}

Continuing with our running example, if the tree in Figure \ref{fig:example} was correctly-labelled with respect to 
a goal relation $\sat_1$, then it must be the case that $\sat_1$ contains the following relations $(\xrightarrow{w}, w)$, $(\xrightarrow{ec}, ec)$, 
$(\xrightarrow{w}\xrightarrow{ec}, eu)$, $(\xrightarrow{b}, c)$, $(\xrightarrow{x}, c)$, $(\xrightarrow{w}\xrightarrow{ec}, c)$, $(\xrightarrow{w}\xrightarrow{ec}\xrightarrow{l}, a)$, $(\xrightarrow{b}\xrightarrow{l},  a)$, $(\xrightarrow{x}\xrightarrow{l}, a)$, $(\xrightarrow{b}, b)$, $(\xrightarrow{x}, x)$ and $(\xrightarrow{l}, l)$. 
If 
$(\xrightarrow{w}\xrightarrow{ec},  c) \not \in \sat_1$, then the tree in Figure \ref{fig:example} would not be correctly-labelled, because the subtree at the bottom-left with three nodes is not correctly-labelled, but the tree in Figure \ref{fig:flat-tree} would be correctly-labelled if all intermediate nodes are labelled with $a$ (access server).

It is worth remarking that our correctness property on labels is stronger than the one provided in 
\cite{Mantel2019}, as they only require the root node to be correctly labelled. 
We 
believe their choice follows from their attack tree grammar, which lets 
intermediate nodes to go unlabelled. 
We point out, however, that labels are 
as relevant for the root node as for any other node of the tree. Hence we 
require all labels of the tree to be correct. 
Further note that we still have an obligation to show that it is possible to obtain or build a goal relation for security scenarios. We shall address this obligation later in Section \ref{sec-system}, where we introduce a 
system specification language that produces attacks with well-defined goals.

\subsection{The labelling of attack trees}

In the same way as there are many trees with the same semantics but with 
different shapes, there might be many correctly-labelled trees with the same 
semantics but with different labels. It remains to decide 
which, amongst all correctly-labelled trees with a given semantics, 
are better suited to be the output of the attack-tree generation problem. 
We do so by arguing that a root goal of a (sub-)tree 
should give the minimum information necessary for the analyst to make sound 
conclusions about the 
attacks 
the tree is representing. 
Next we define what we mean by \emph{minimum information}. 

Given a goal $\agoal \in \basicact$, let $\attacksfunc{\agoal}$
be the set of attacks satisfying $\agoal$, i.e., 
$\attacksfunc{\agoal} = \{\anattack \in \spset{\basicact}| \anattack \sat 
\agoal\}$. 
Observe that $\attacksfunc{\agoal}$ gives information about $\agoal$
that is independent from 
the attack-tree model. For example, given $c$ and the goal relation $\sat_1$ defined earlier, we obtain that 
$\attacksfunc{c} = \{\xrightarrow{b}, \xrightarrow{x}, \xrightarrow{w}\xrightarrow{ec}\}$. This knowledge is independent from whether 
we choose the attack tree in Figure \ref{fig:example} over the attack tree in Figure \ref{fig:flat-tree}, or vice-versa. 

Now, if we compare the attack tree in Figure \ref{fig:example} with the one in Figure \ref{fig:example-one-attack-missing}, we notice that the latter is missing the attack $\xrightarrow{e} \xrightarrow{l}$. One can notice the difference by inspecting the two trees, but also by comparing the semantics of the trees with respect to the attacks that satisfy the goal $\access$ as established by the goal relation $\sat_1$, which are 
$\attacksfunc{\access} = \{\xrightarrow{w} \xrightarrow{ec} \xrightarrow{l}, \xrightarrow{b} \xrightarrow{l}, \xrightarrow{e} \xrightarrow{l}\}$. Therefore, given a correctly-labelled tree $t$ with root goal $\agoal$, 
we say that an attack $\anattack$ is missing if $\agoal$ 
indicates that $\anattack$ is present, i.e. $\anattack \in 
\attacksfunc{\agoal}$, yet $\anattack$ is not in $t$, i.e. 
$\anattack \not \in \sem{t}$. 
In words, a missed attack is an attack hinting at a label that does not appear 
in the 
tree.

\begin{figure}[t]
	\centering
	\includegraphics[width=0.80\textwidth]{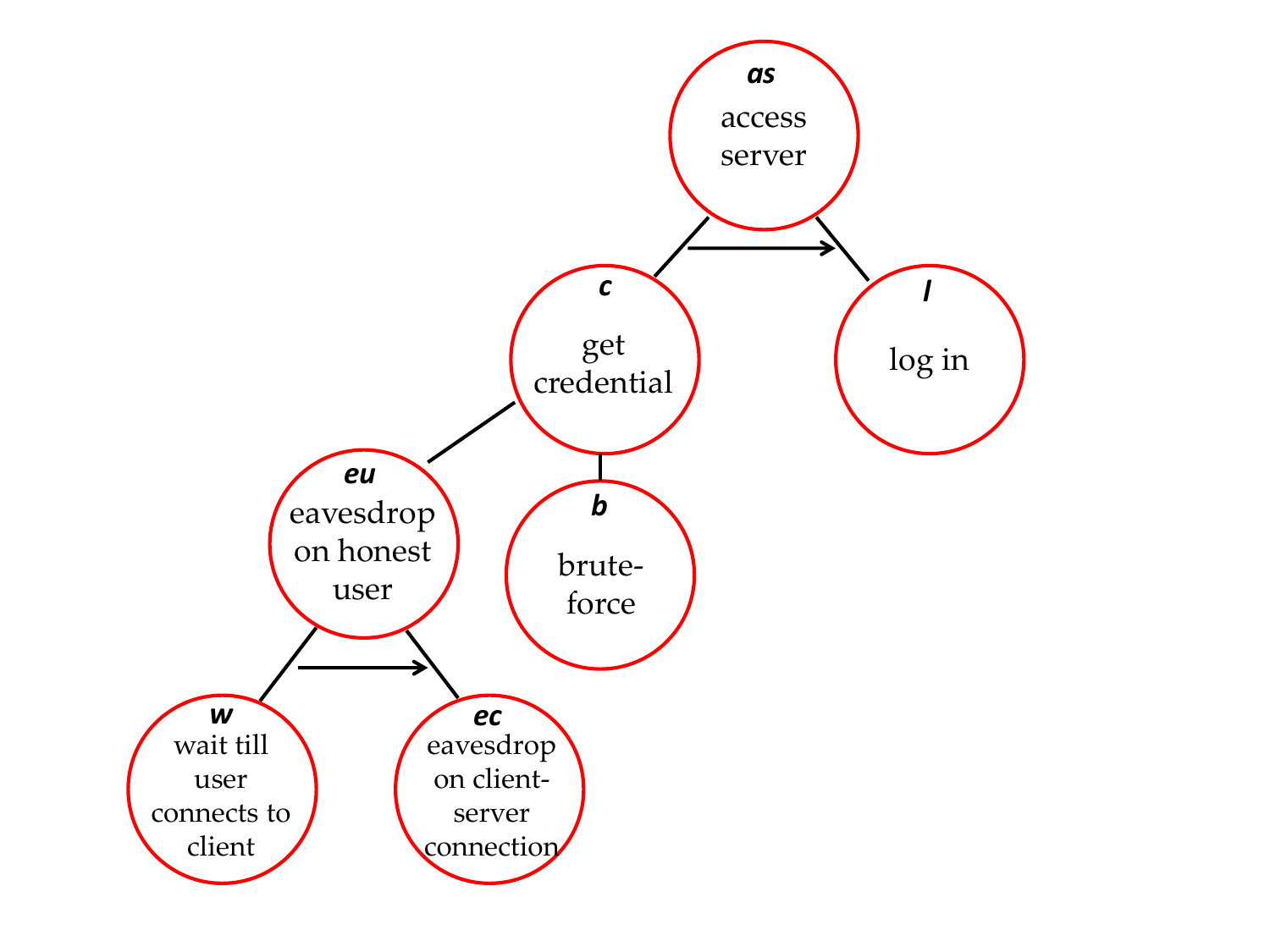}
	\caption{An attack without the attack that uses an exploit. 
	}
	\label{fig:example-one-attack-missing}
\end{figure}

% The analyst, just by looking at the root goal $\agoal$ 
% of a correctly-labelled tree $t$, can soundly conclude that all attacks in $t$ 
% belong to $\attacksfunc{\agoal}$. 
% Thus, given an attack tree $t$ with root goal $\agoal$, 
% we interpret $\attacksfunc{\agoal}$ as 
% the prior knowledge of the analyst, which is independent from $t$, and 
% $\sem{t}$ as the posterior knowledge, which the analyst learns by looking at 
% how $t$ fully or partially breaks down $\agoal$ into smaller 
% components. 
% Under that interpretation, 
% we say that an attack $\anattack$ is missing if $\agoal$ 
% indicates that $\anattack$ is present, i.e. $\anattack \in 
% \attacksfunc{\agoal}$, yet $\anattack$ is not in $t$, i.e. 
% $\anattack \not \in \sem{t}$. 
% In words, a missed attack is an attack hinting at a label that does not appear 
% in the 
% tree. 
% Therefore, we will be looking to label trees in such a way that the number of 
% missed attacks is minimised. 

We formalize the notion of missed-attacks as a strict partial order on labels. Given a set of attacks $\attackset$, we say that a label $\agoal$ is \emph{smaller} than another label $\agoal'$ with respect to $\attackset$, denoted $\agoal <_{\attackset} \agoal'$, iff $\attacksfunc{\agoal} \setminus \attackset \subset \attacksfunc{\agoal'} \setminus \attackset$. 
It is easy to prove that $<_{\anatree}$ is a strict partial order. 
% We formalize the notion of missed-attacks as a strict partial order on labels. Given an attack tree $\anatree$, we say that label $\agoal$ is smaller than label $\agoal'$, denoted $\agoal <_{\anatree} \agoal'$, 
This leads to the following definition of optimally-labelled tree: a tree is optimally-labelled if none of its labels can be replaced by a smaller label without making the tree incorrectly-labelled. 

% We formalize the notion of missed-attacks as a strict partial order on labels. Given an attack tree $\anatree$, we say that label $\agoal$ is smaller than label $\agoal'$, denoted $\agoal <_{\anatree} \agoal'$, if and only if $\attacksfunc{\agoal} \setminus \sem{t} \subset \attacksfunc{\agoal'} \setminus \sem{t}$. 
% This leads to the following definition of optimally-labelled tree, whereby a tree is optimally-labelled if none of its labels can be replaced by a smaller label without making the tree incorrectly-labelled. 

\begin{definition}[Optimally-labelled tree]\label{def-labelling}
	We say that a tree $t$ is \emph{optimally-labelled} if every subtree $t'$ of $t$ satisfies that its root label $\ftop(t')$ cannot be replaced by a smaller label with respect to $<_{\sem{t'}}$ without making $t'$ incorrectly labelled. 
	% Let $t$ be a correctly-labelled attack tree. Let  $\length{t} = 
	% \attacksfunc{\ftop(t)} \setminus \sem{t}$, where 
	% % $\setminus$ is the set-difference operator and 
	% $\ftop(t)$ is the root 
	% goal of $t$, be 
	% the 
	% number of missed attacks caused by the label $\ftop(t)$. We say 
	% that $t$ is 
	% \emph{optimally-labelled} if the following two conditions hold:
	% \begin{enumerate}
	% 	\item All proper subtrees of $t$ are optimally labelled
	% 	\item If $t \equiv g \refOp \treeop(t_1, \ldots, t_n)$ with $\treeop 
	% 	\in 
	% 	\{\OR, 
	% 	\AND, \SAND\}$, then every tree of the form $g' 
	% 	\refOp \treeop(t_1, \ldots, t_n)$  with 
	% 	$\length{g' 
	% 		\refOp \treeop(t_1, \ldots, t_n)} \subset \length{t}$ has to be 
	% 	incorrectly-labelled. 
	% \end{enumerate}
\end{definition}

% Notice that $\length{t}$ might be infinite if the goal relation is infinite. In this case, the tree $t$ would be missing an infinite number of attacks, making it as optimal as any other tree missing  an infinite number attack. In practice, we can avoid infinite goal relations, though, as shown in Theorem \ref{theo-findlabel} in Section \ref{sec-system}.
% In the definition above we are assuming that the set of attacks is finite, 
% which seems reasonable as we are building finite attack trees.  

\subsection{Problem statement}

\begin{definition}[The attack-tree generation problem with optimal labelling 
and shape]
	Given a finite set of attacks $\attackset \subseteq \spset{\basicact} $ and 
	goal 
	relation $\sat \subseteq \spset{\basicact} \times 
\basicact$, \emph{the attack-tree 
		generation problem} consists in finding an 
	optimally-labelled tree $\anatree$ with semantics $\attackset$ and optimal 
	shape (i.e.\ with minimum size). 
	That is, if $\anatree$ is the output of the 
	generation algorithm, then there does not exist another optimally-labelled 
	tree 
	$\anatree'$ 
	with $\sem{\anatree'} = \sem{\anatree} = \attackset$ and $size(\anatree') 
	< size(\anatree)$.

\end{definition}
We finish this section by remarking that other optimality criteria can be 
considered in the attack-tree generation problem, yet insisting that making 
those criteria 
explicit is the only way to objectively compare automated
generation 
approaches.

\section{A heuristic based on factoring to minimise the attack tree size}
\label{sec-factorisation}

In this section we introduce a heuristic to minimise the size of attack trees 
based on factoring algebraic expressions, which has been 
long studied in algebra and digital circuits. 
We start the section by 
motivating the use 
of attack trees in the factored form, then reduce the problem of factorising 
attack 
trees to the problem of factorising algebraic expressions. 

\subsection{Attack trees in the factored form}

Consider the following set of attacks 
$\{\xrightarrow{a}\xrightarrow{b},
\xrightarrow{c}\xrightarrow{d}\}$ and, for the moment,  let's ignore the 
labels of intermediate nodes. 
In this case, it is easy to prove that an attack tree with minimum size with 
such 
semantics is 
$\anatree = \epsilon \refOp \OR(
\epsilon \refOp\SAND(a, b), \epsilon \refOp\SAND(c, d))
$, where $\epsilon$ denotes a placeholder for labels. 
The proof follows from the observation that $\{\xrightarrow{a}\xrightarrow{b},
\xrightarrow{c}\xrightarrow{d}\}$ is \emph{irreducible}, in the sense that it 
cannot be expressed by an attack tree of the form $\abasicact \refOp 
\treeop(\anatree_1, \ldots, \anatree_n)$ with $\treeop \in \{\AND, \SAND\}$. 
We borrow the notion of irreducibility from the area of polynomial 
factorisation where, roughly speaking, 
a polynomial is said to be irreducible if it cannot be expressed as the product 
of two non-constant polynomials. 

\begin{definition}[Reducibility and the factored form for attack trees]
	We say that a set of SP graphs $\spgraphset$ is 
	\emph{reducible} iff there exists a \SAND{} tree of the form $\anatree 
	= \abasicact 
	\refOp 
	\treeop(\anatree_1, \ldots, \anatree_n)$ with $\treeop \in \{\AND, \SAND\}$ 
	such that $\spgraphset = \sem{\anatree}$.
	Moreover, we say that 
	$\anatree$ is a \emph{factorisation} of $\spgraphset$ and that $\anatree$ 
	is in the \emph{factored form} if $\sem{t_1}$, $\ldots$, 
	$\sem{t_n}$ are irreducible.
\end{definition}

Factorisation is a popular technique for the simplification of algebraic 
expressions and we consider it equally useful for the simplification 
(a.k.a.\ minimisation) of attack trees. The reason is that,
in the same way that 
$x * y$ in general has more terms than $x + y$ in their sum-of-product 
representation, where $x$ and $y$ are algebraic expressions, $\abasicact 
\refOp 
\treeop(\anatree, \anatree')$ with $\treeop \in \{\AND, \SAND\}$ in general 
represents a larger set of SP graphs 
than $\abasicact 
\refOp 
\OR(\anatree, \anatree')$. That is to say, while in attack trees conjunctive 
refinements have a multiplying effect in their semantics, disjunctive 
refinements 
do not. 
Our heuristics to solve the optimal attack-tree generation problem thus 
consists of 
creating attack trees in the factored form whenever possible.
Understanding under which conditions a set of SP graphs is 
reducible/irreducible is the first step in this direction, which we pursue 
next. 

\begin{proposition}\label{prop1}
	A set of SP graphs $\spgraphset$ is irreducible if it contains a 
	simple graph\footnote{Recall that a simple SP graph is an SP graph of the form $\xrightarrow{b}.$} or two graphs $g, g' \in \spgraphset$ such that $g$ can be 
	written as 		$g_1 \parop \ldots 
		\parop g_n$ with $n > 1$  
		and $g'$ can be written as $g'_1 
		\seqop \ldots \seqop 
		g'_m$ with $m > 1$. 
\end{proposition}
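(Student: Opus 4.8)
The plan is to argue by contradiction, exploiting the fact that reducibility forces a uniform top-level shape on every member of $\spgraphset$. Suppose $\spgraphset = \sem{\anatree}$ for some factorisation $\anatree = \abasicact \refOp \treeop(\anatree_1, \ldots, \anatree_k)$ with $\treeop \in \{\AND, \SAND\}$ and $k > 1$. Unfolding Definition~\ref{def:sp-sem}: if $\treeop = \AND$ then $\spgraphset = \sem{\anatree_1} \genparop \ldots \genparop \sem{\anatree_k}$, so by the definition of the lifted operator every graph $G \in \spgraphset$ has the form $G_1 \parop \ldots \parop G_k$ with each $G_i \in \sem{\anatree_i}$; since each $\sem{\anatree_i}$ is non-empty and its graphs carry at least one edge, every member of $\spgraphset$ is a genuine parallel composition of $k > 1$ factors. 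Symmetrically, if $\treeop = \SAND$ then every member of $\spgraphset$ is a genuine sequential composition. The task thus reduces to showing that each of the two hypotheses is incompatible with both ``all members are parallel compositions'' and ``all members are sequential compositions''.

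The heart of the argument is a separation lemma: no single SP graph is simultaneously a non-trivial parallel composition and a non-trivial sequential composition. I would prove it through the notion of a \emph{separating vertex}, meaning an internal vertex (distinct from the source and the sink) that lies on every directed path from source to sink. First, I would show that a non-trivial sequential composition $h_1 \seqop \ldots \seqop h_m$ ($m > 1$) always has such a vertex, namely the junction $v$ identifying the sink of $h_1$ with the source of $h_2 \seqop \ldots \seqop h_m$; because every SP graph has distinct source and sink, $v$ differs from the global source and sink, and since $v$ is the only vertex shared by $h_1$ and the remaining factor, every source-to-sink path must pass through it. Conversely, I would show that a non-trivial parallel composition $g_1 \parop \ldots \parop g_n$ ($n > 1$) has no separating vertex: any internal vertex lies in exactly one branch $g_i$, since the branches share only the source and the sink, whereas some other branch $g_j$ (with $j \neq i$, which exists because $n > 1$) carries a source-to-sink path avoiding every internal vertex of $g_i$. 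These two conclusions are contradictory, which yields the lemma.

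With the lemma available I would finish by a short case analysis. If $\spgraphset$ contains a simple graph, that graph has exactly one edge, whereas any non-trivial parallel or sequential composition has at least two edges; hence it can be neither, contradicting both the $\AND$ and the $\SAND$ readings of paragraph one. If instead $\spgraphset$ contains $g = g_1 \parop \ldots \parop g_n$ and $g' = g'_1 \seqop \ldots \seqop g'_m$, then the $\AND$ reading would force $g'$ to be a parallel composition although it is already a sequential one, and the $\SAND$ reading would force $g$ to be a sequential composition although it is already a parallel one; both situations violate the separation lemma. In every case the assumed factorisation is impossible, so $\spgraphset$ is irreducible.

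I expect the main obstacle to be the separation lemma, specifically making the ``no separating vertex'' claim for parallel compositions fully rigorous. It rests on the structural fact that parallel composition glues its components only at the shared source and sink while keeping their internal vertices disjoint, together with the elementary observation that every SP graph admits at least one source-to-sink path; both are routine inductions over the SP-graph formation rules. Once these are in place, the remaining work — unfolding the lifted composition operators and the edge count for the simple-graph case — is straightforward bookkeeping. I would also make explicit at the outset the convention that a factorisation has $k > 1$ children, since with $k = 1$ we would have $\sem{\abasicact \refOp \treeop(\anatree_1)} = \sem{\anatree_1}$ and every set would be trivially reducible.
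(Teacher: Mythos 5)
Your proof is correct and takes essentially the same route as the paper's: assume a factored form $\abasicact \refOp \treeop(\anatree_1, \ldots, \anatree_n)$ with $\treeop \in \{\AND, \SAND\}$, unfold the SP semantics to see that every member of the resulting set is a genuine parallel (resp.\ sequential) composition, and derive a contradiction with the presence of a simple graph or of the mixed pair $g, g'$. The only difference is that the paper asserts without argument that no SP graph can simultaneously be a non-trivial parallel and a non-trivial sequential composition (and that simple graphs are neither), whereas you actually prove this disjointness via the separating-vertex lemma and the edge count, and you rightly flag the $n > 1$ convention on factorisations without which the statement would fail for singleton sets --- both are rigorous fillings-in of steps the paper leaves implicit rather than a different approach.
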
 

\begin{proof}

	The proof of this proposition is simple, yet illustrative. 
	Assume there 
	exists 
	a tree $\anatree $ in factored form whose semantics is $\spgraphset$. 
	This means that $\anatree$ has form $\abasicact 
	\refOp 
	\treeop(\anatree_1, \ldots, \anatree_n)$ with $\treeop \in \{\AND, 
	\SAND\}$. 
	If $\treeop = \AND$ then, according to 
	the SP semantics, $\sem{t} = \sem{t_1} \genparop ... 
	\genparop \sem{t_k}$, in which case neither $g'$ nor any simple graph 
	belong to 
	$\sem{t}$. 
	Otherwise,  
	$\sem{t} = \sem{t_1} \genseqop 
	... \genseqop \sem{t_k}$, in which case neither $g$ nor any simple graph 
	belong to $\sem{t}$.
\end{proof}

The proposition above tells we can only factorise SP graphs when they all are 
in 
sequential composition or they all are in parallel composition, where an SP 
graph is said to be in parallel composition (resp.\ sequential composition) 
if it can be written as $G_1 \parop \ldots \parop G_n$ (resp. $G_1 \seqop 
\ldots 
\seqop G_n$). 
We shall call a set of SP graphs that are all in either sequential or parallel 
composition a \emph{homogeneous} set of SP graphs. Any set with a single 
SP graph is homogeneous too, regardless of whether the graph is simple or a 
composed graph. 
As shown in Proposition \ref{prop1}, being homogeneous is a necessary condition for a set $\spgraphset$ of SP graphs to be reducible, while being reducible is a necessary condition for the existence of an attack tree in factored form whose semantics is $\spgraphset$.  
Hence, the first task of our generation procedure (described in Algorithm 
\ref{alg:gen}) is: given a set of SP graphs $\spgraphset$, partition 
$\spgraphset$ into homogeneous subsets $\spgraphset_1, \ldots, \spgraphset_k$.   

We require each subset $\spgraphset_i$ to contain SP graphs that have a common 
goal with respect to some goal relation $\sat$, 
otherwise there would not exist a correctly labelled tree whose 
semantics is $\spgraphset_i$. Moreover, because we seek trees of minimum size, 
Algorithm 
\ref{alg:gen} aims to find the 
smallest partition possible. 
These two requirements on $\spgraphset_1, \ldots, \spgraphset_k$ make the 
partitioning problem 
NP-hard, as we prove next\footnote{The proof generalises 
the proof provided in our 
	previous 
	work \cite{GJMTW17}, as it imposes no restriction on the attacks, nor 
	on 
	the goal
	relation. 
}.

\begin{theorem}\label{theo-cover-reduction}
	Let $\sat$ be a goal relation on a set of graphs $\spgraphset$ and labels 
	$\basicact$. 
	The set 
	cover problem is polynomial-time reducible to the 
	problem of finding the smallest partition $\{\spgraphset_1, \ldots, 
	\spgraphset_{k}\}$ of $\spgraphset$ such that, for every $i 
	\in 
	\{1, \ldots, 
	k\}$, 
	$\spgraphset_i$ 
	is a homogeneous set of SP graphs and there exists a label 
	$\abasicact \in \basicact$ such that $\forall g \in \spgraphset_i \colon g 
	\sat 
	\abasicact$.
\end{theorem}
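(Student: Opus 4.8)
The plan is to construct a polynomial-time reduction from the set cover problem that preserves optimal values, so that any algorithm computing the smallest admissible partition also solves set cover; this establishes the NP-hardness transfer. Fix a set cover instance given by a universe $U = \{x_1, \ldots, x_n\}$ and a family $S_1, \ldots, S_m \subseteq U$ with $\bigcup_{j} S_j = U$. The idea is to represent each universe element by a distinct SP graph and each subset by a distinct label, and to encode the membership $x_i \in S_j$ through the goal relation $\sat$.

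Concretely, for each $x_i$ I would take the two-edge sequential graph $g_i = \xrightarrow{a}\xrightarrow{b_i}$, where $a$ is a fixed label and the labels $b_i$ are pairwise distinct, and set $\spgraphset = \{g_1, \ldots, g_n\}$. Since each $g_i$ is a sequential composition of two single-edge graphs, every $g_i$ is in sequential composition, so every non-empty subset of $\spgraphset$ is homogeneous and the homogeneity constraint becomes vacuous. For the labels I take $\basicact = \{\abasicact_1, \ldots, \abasicact_m\}$ and define $g_i \sat \abasicact_j$ iff $x_i \in S_j$; as every element belongs to some $S_j$, the relation is total. The whole construction is computable in time polynomial in $n + m$.

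The correctness hinges on the observation that a subset $\spgraphset_i \subseteq \spgraphset$ has a common goal $\abasicact_j$ precisely when the corresponding elements lie in $S_j$. I would then verify both directions. From an admissible partition $\{\spgraphset_1, \ldots, \spgraphset_k\}$, each block $\spgraphset_i$ has a common goal $\abasicact_{j(i)}$, so its elements are contained in $S_{j(i)}$; as the blocks exhaust $\spgraphset$, the family $\{S_{j(1)}, \ldots, S_{j(k)}\}$ covers $U$ with at most $k$ distinct sets. Conversely, from a cover $\{S_{j_1}, \ldots, S_{j_k}\}$ I assign each $x_i$ to one covering set that contains it; grouping the graphs by their assigned set yields a partition of $\spgraphset$ into at most $k$ non-empty blocks, each contained in some $S_{j_\ell}$ and hence sharing the goal $\abasicact_{j_\ell}$. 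Consequently the minimum number of admissible blocks equals the minimum cover size.

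The step I expect to demand the most care is ensuring that the homogeneity requirement does not interfere: the definition leaves the status of a single simple edge ambiguous, and an instance built from single-edge graphs might be forced into singleton blocks, breaking the correspondence with set cover. Using two-edge sequential graphs avoids this, since they are unambiguously in sequential composition regardless of how the degenerate case is read. The only other subtlety is the mismatch between partitions (disjoint) and covers (possibly overlapping), which the element-assignment argument above resolves, since assigning each element to a single covering set uses at most as many sets as the cover has members.
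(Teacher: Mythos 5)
Your proposal is correct and follows essentially the same reduction as the paper: universe elements become SP graphs, subsets become labels, membership $x_i \in S_j$ becomes the goal relation, and both directions of the partition--cover correspondence (peeling a cover into disjoint blocks, and mapping each block to a witnessing label) are argued the same way, yielding equality of optimal values. The one point where you are more careful than the paper is homogeneity: the paper simply assumes ``without loss of generality'' that $U$ is a homogeneous set of SP graphs, whereas your explicit two-edge sequential graphs $\xrightarrow{a}\xrightarrow{b_i}$ make the encoding concrete and sidestep the ambiguous status of simple graphs, which is a genuine (if minor) tightening rather than a different proof.
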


\begin{proof}
The proof can be found in the Appendix section. 
\end{proof}

We propose in Algorithm \ref{alg-partition} a 
heuristic to 
partition a 
set of SP graphs based on the standard greedy heuristic used to 
approximate 
the set cover problem. 
Starting from a set $C$ with a single element, the algorithm adds 
elements to $C$ (in no particular order) as long as $C$ remains homogeneous 
and 
all SP graphs in $C$ share a common goal. Once $C$ cannot be increased any 
further, $C$ is considered an element of the optimal partition. This process is 
repeated for the remaining SP graphs until no graph is left out of the 
partition. 
 
\begin{algorithm}
	\caption{A heuristic to solve the partition problem as described in Theorem 
		\ref{theo-cover-reduction}.\label{alg-partition}}
	\reducespacingalg
	\begin{algorithmic}[1]
		\Require A goal relation $\sat$  and a set of SP graphs $\spgraphset$. 
		\Ensure A partition $\{\spgraphset_1, 
		\ldots, 
		\spgraphset_{k}\}$ of $\spgraphset$ 
		such that, for 
		every $i 
		\in 
		\{1, \ldots, 
		k\}$, 
		$\spgraphset_i$ 
		is a homogeneous set of SP graphs and there exists a label 
		$\abasicact$ such that $\forall g \in \spgraphset_i \colon g \sat 
		\abasicact$. 
		The optimisation goal of the algorithm is the minimisation of $k$.	
		
		\Function{Partition}{$\sat, \spgraphset$}
		\State Let $S$ be an empty set

		\While{$\spgraphset$ is non empty}
		\State Let $C = \emptyset$
		\ForAll{$g \in \spgraphset$ such that $C \cup \{g\}$ 
			is a homogeneous set of SP graphs}
		\If{$\exists b \colon g \sat b \wedge \forall g' \in C 
			\colon g' \sat b$} \Comment{Always true when $C = \{g\}$, 
			\phantom{\hspace{7.2cm}} which 
			guarantees termination.}
		\label{step-goal-relation1}
		\State Add $g$ to $C$ and remove $g$ from $\spgraphset$ 
		\EndIf
		\EndFor
		\State Add $C$ to $S$
		\EndWhile
		\State \Return $S$
		\EndFunction
	\end{algorithmic}
\end{algorithm}

Let us illustrate the algoritm above on input the semantics $\{\xrightarrow{w}\xrightarrow{ec}\xrightarrow{l},
\xrightarrow{b}\xrightarrow{l}, \xrightarrow{x}\xrightarrow{l}\}$ of our attack tree example depicted in Figure \ref{fig:example} 
and the goal relation $\sat_1$ introduced in the previous section.  

Before proceeding any further, we give in 
Algorithm \ref{alg:gen} the entry point of our generation 
approach. Its goal is to optimally partition the input set into homogeneous 
sets of SP graphs, relying on specialised procedures for the factorisation 
of homogeneous 
sets of SP graphs. 
The algorithm starts by looking for an optimal label $b$ for a tree whose 
semantics 
is $\spgraphset$. Then it checks whether $\spgraphset$ is 
homogeneous, 
in which case it calls a specialised factorisation algorithm, called 
\textsc{TreeFactorisation} and described in Algorithm \ref{alg:factorisation2} 
further below, that produces optimally-labelled trees in factored form given a 
set of homogeneous 
SP graphs. Otherwise, it addresses 
the 
partition problem described earlier via a 
simple greedy 
heuristics. The output $\{\spgraphset_1, \ldots, 
\spgraphset_{k}\}$ of the partition problem  is used to create a \SAND{} tree 
$\abasicact \refOp \OR(t_1, \ldots, t_k)$ where, for every $i \in \{1, \ldots, 
k\}$, $t_i$ is the tree obtained by recursively calling the algorithm 
$\textsc{TreeGeneration}$ (Algorithm \ref{alg:gen})
on input 
$(\sat,\spgraphset_i)$.

\begin{algorithm}
	\caption{Attack-tree generation via factorisation (Entry 
		point)}\label{alg:gen}
\reducespacingalg
	\begin{algorithmic}[1]
		\Require A goal relation 
		$\sat$ and a non-empty set of SP graphs $\spgraphset$ that share a 
		common goal.

		\Ensure An optimally-labelled tree $\anatree$ with semantics 
		$\spgraphset$ and root goal $\abasicact$. 
		The 
		tree 
		has been 
		optimised for minimum 
		size via factorisation. 
		\Function{TreeGeneration}{$\sat, \spgraphset$}
		\State $b$ is the output of 
		\textsc{FindOptimalLabel} on input $(\sat, \spgraphset)$ 
		\Comment{Defined 
		below}
		\State Let $\spgraphset^{par} = \{g \in \spgraphset | g \textbf{ is not 
			in parallel composition}\}$
		\State Let $\spgraphset^{seq} = \{g \in \spgraphset | g \textbf{ is not
			in sequential composition}\}$
		\If{either $\spgraphset^{par}$ or $\spgraphset^{seq}$ is empty}
		\State \Return the output of \textsc{TreeFactorisation} on 
		input 
		$(\sat, \spgraphset)$ \Comment{This procedure is introduced later at 
		Algorithm \ref{alg:factorisation2}. It aims to produced an 
		optimally-labelled tree in factored form with semantics $\spgraphset$.}
		\label{step-paralallel}

		\Else
		\State $\{\spgraphset_1, 
		\ldots, 
		\spgraphset_{k}\}$ is the output of the
		\textsc{Partition} procedure on input $(\sat, \spgraphset)$

		\label{step-covering}

		\label{step-goal-optimisation}
		\State $\forall i$, $t_i$ denotes the output of 
		\textsc{TreeGeneration} on ($\sat, \spgraphset_i$)
		\State \Return $\abasicact \refOp \OR(t_1, \ldots, t_k)$
		\EndIf
		\EndFunction
		\Function{FindOptimalLabel}{$\sat, \spgraphset$}
			\State \Return $b$ such that $\spgraphset \subseteq attacks(b)$ and no $b'$ smaller than $b$ exists such that $\spgraphset \subseteq attacks(b')$
		\EndFunction

	\end{algorithmic}
\end{algorithm}

\begin{theorem}\label{theo-alg-correct1}
	If the \textsc{TreeFactorisation} function defined at Algorithm 
	\ref{alg:factorisation2} further below always gives an 
	optimally-labelled 
	tree, then 
	Algorithm \ref{alg:gen} always gives an optimally-labelled tree. 
\end{theorem}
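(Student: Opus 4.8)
The plan is to proceed by strong induction on the number of SP graphs in the input set, strengthening the statement so that the induction hypothesis also records the semantics of the returned tree: I would prove that $\textsc{TreeGeneration}(\sat,\spgraphset)$ returns an optimally-labelled tree $\anatree$ with $\sem{\anatree}=\spgraphset$. Carrying the semantic claim is unavoidable, since the loss $\length{\anatree'}$ of any node is computed from $\sem{\anatree'}$, so clause~2 of Definition~\ref{def-labelling} cannot be verified without knowing what the children of a node denote. The two cases of the induction mirror the two branches of Algorithm~\ref{alg:gen}.

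In the factorisation branch the algorithm returns $\textsc{TreeFactorisation}(\sat,\spgraphset)$ unchanged, so both the optimal-labelling and the semantic parts of the claim follow directly from the hypothesis of the theorem together with the specification of $\textsc{TreeFactorisation}$ (namely, that it returns a tree in factored form with semantics equal to its input); this branch also discharges the base cases of the recursion. In the partition branch the output is $\abasicact \refOp \OR(t_1,\ldots,t_k)$ with $\{\spgraphset_1,\ldots,\spgraphset_k\}=\textsc{Partition}(\sat,\spgraphset)$ and $t_i=\textsc{TreeGeneration}(\sat,\spgraphset_i)$. The first thing I would establish is that this branch is entered only when $\spgraphset$ is non-homogeneous, so that no single homogeneous part can absorb all of $\spgraphset$ and therefore $k\geq 2$; this guarantees $|\spgraphset_i|<|\spgraphset|$ and makes the induction well-founded. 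The hypothesis then yields, for each $i$, that $t_i$ is optimally-labelled with $\sem{t_i}=\spgraphset_i$, and since the parts form a partition of $\spgraphset$ we obtain $\sem{\abasicact \refOp \OR(t_1,\ldots,t_k)}=\bigcup_i \sem{t_i}=\spgraphset$, settling the semantic claim.

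It then remains to check the two clauses of Definition~\ref{def-labelling} for $\abasicact \refOp \OR(t_1,\ldots,t_k)$. Clause~1 is immediate from the induction hypothesis, which makes every $t_i$, and hence every proper subtree of the constructed tree, optimally-labelled; the same hypothesis makes each $t_i$ correctly-labelled, and since $\textsc{FindOptimalLabel}$ guarantees $\spgraphset\subseteq\attacksfunc{\abasicact}$, the whole tree is correctly-labelled. For clause~2 I would fix an arbitrary label $g'$ for which $g'\refOp\OR(t_1,\ldots,t_k)$ is correctly-labelled. Because the children are untouched, correctness reduces to the root condition $\spgraphset\subseteq\attacksfunc{g'}$; combined with $\sem{g'\refOp\OR(t_1,\ldots,t_k)}=\spgraphset$ this collapses the loss to $\length{g'\refOp\OR(t_1,\ldots,t_k)}=|\attacksfunc{g'}|-|\spgraphset|$. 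As $\textsc{FindOptimalLabel}$ returns a label $\abasicact$ minimising $|\attacksfunc{\abasicact}|$ subject to $\spgraphset\subseteq\attacksfunc{\abasicact}$, any competing correctly-labelled $g'$ satisfies $|\attacksfunc{g'}|\geq|\attacksfunc{\abasicact}|$ and hence $\length{g'\refOp\OR(t_1,\ldots,t_k)}\geq\length{\abasicact\refOp\OR(t_1,\ldots,t_k)}$; contrapositively, any tree of this shape with strictly smaller loss is incorrectly-labelled, which is exactly clause~2.

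I expect clause~2 to be the crux, and the key enabling observation is that for an $\OR$ root the semantics is the union of the children's semantics and is thus \emph{independent of the root label}; consequently the root label affects the loss only through $\attacksfunc{\cdot}$, so minimising loss over correctly-labelled trees of fixed shape coincides exactly with the minimisation performed by $\textsc{FindOptimalLabel}$. Two supporting points require attention rather than ingenuity: confirming from the branch guard (which tests whether $\spgraphset^{par}$ or $\spgraphset^{seq}$ is empty) that the partition branch strictly reduces the problem size, which amounts to checking that the homogeneous and singleton inputs are routed to $\textsc{TreeFactorisation}$ so that $k\geq 2$ whenever $\textsc{Partition}$ is actually invoked; and making explicit that the hypothesis on $\textsc{TreeFactorisation}$ is to be read together with its specification, so that its output carries the correct semantics that anchors the base of the induction.
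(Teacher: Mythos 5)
Your proposal is correct and takes essentially the same route as the paper's own proof, which is a compressed version of your argument: the paper likewise notes that \textsc{FindOptimalLabel} makes the root loss $|\attacksfunc{b}\setminus\spgraphset|$ minimum (your clause-2 analysis), that every recursive call preserves the common-goal invariant so a correctly-labelled tree always exists, and that the recursion together with the hypothesis on \textsc{TreeFactorisation} yields optimal labelling of all subtrees (your clause 1). Your write-up merely makes explicit what the paper leaves implicit --- the induction measure, the semantic invariant $\sem{\anatree}=\spgraphset$, and the well-foundedness of the recursion via $k\geq 2$ --- without changing the underlying approach.
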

\begin{proof}
	First, notice that the 
	root goal of the tree produced by 
	Algorithm \ref{alg:gen} via the procedure $\textsc{FindOptimalLabel}$ 
	satisfies Definition \ref{def-labelling} in a straightforward manner. 
	Second, each call to Algorithm \ref{alg:factorisation2} and 
	Algorithm \ref{alg:gen} satisfies that the input set of SP graphs share a 
	common goal, hence ensuring that an 
	optimally-labelled tree can always be constructed. 
	The recursive nature of the algorithm 
	and the assumption that Algorithm \ref{alg:factorisation2} always gives an 
	optimally-labelled 
	tree, 
	ensures that all subtrees constructed by Algorithm \ref{alg:gen} are 
	optimally labelled.
\end{proof}
\begin{remark}
	For the moment it is unclear how $\textsc{FindOptimalLabel}$ can be 
	implemented in an efficient manner. In Section 
	\ref{sec-system} we show how to do so.

\end{remark}

\subsection{Factorising homogeneous sets of SP graphs via algebraic 
factorisation}
\label{sub-parallel}

The next step is to derive a procedure to 
factorise 
homogeneous sets of SP graphs, which we address here by showing that 
homogeneous sets of SP 
graphs have 
algebraic properties that make them amenable to techniques for algebraic 
factorisation. 

\begin{proposition}
Let $\spset{\basicact}^{seq}$ (resp. $\spset{\basicact}^{par}$) be all 
SP 
graphs that are either simple graphs or are in sequential (resp. parallel) 
composition. 
Then $\spset{\basicact}^{seq}$ is a semigroup under the sequential 
composition operator $\seqop$, and that $\spset{\basicact}^{par}$ is a 
commutative 
semigroup under 
the parallel composition operator $\parop$. 
\end{proposition}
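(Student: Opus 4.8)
The plan is to verify the two semigroup axioms—closure and associativity—for each structure, and additionally commutativity for the parallel case. Throughout I would work with SP graphs up to isomorphism that respects edge labels and the source/sink roles, which is exactly the identification the paper already adopts when it remarks that the choice of which vertex is preserved in a parallel composition is immaterial. Under this convention $\seqop$ and $\parop$ are well-defined operations on (isomorphism classes of) SP graphs, and it is these classes that form the carrier sets.

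First I would establish closure. Take $G, G' \in \spset{\basicact}^{seq}$. Their sequential composition $G \seqop G'$ is, by definition, a sequential composition of two graphs (it has the form $G_1 \seqop \ldots \seqop G_n$ with $n = 2 > 1$), so it is in sequential composition and therefore lies in $\spset{\basicact}^{seq}$. This single argument covers every case, regardless of whether $G$ and $G'$ are simple or already composed. The identical argument, with $\parop$ in place of $\seqop$, shows $G \parop G' \in \spset{\basicact}^{par}$ for $G, G' \in \spset{\basicact}^{par}$. Note that no identity element is required here, since a semigroup need not be a monoid; in particular the simple graphs need not act as units.

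Next I would prove associativity. For $\seqop$, recall that sequential composition identifies the sink of the left operand with the source of the right operand. Both $(G_1 \seqop G_2) \seqop G_3$ and $G_1 \seqop (G_2 \seqop G_3)$ therefore yield the same linear chaining of the three graphs: the sink of $G_1$ is glued to the source of $G_2$ and the sink of $G_2$ is glued to the source of $G_3$, independently of the bracketing, so the two results are isomorphic. For $\parop$, both bracketings merge all three source vertices into a single source and all three sink vertices into a single sink, again yielding isomorphic graphs. Commutativity of $\parop$ follows similarly: $G \parop G'$ and $G' \parop G$ merge the same pair of sources and the same pair of sinks, differing at most in which representative vertex is retained—precisely the difference the paper has already declared irrelevant—so they are isomorphic.

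The only point needing genuine care, and hence the main obstacle, is the bookkeeping of vertex identifications: one must fix the equivalence (label- and role-preserving isomorphism) up front, check that $\seqop$ and $\parop$ descend to well-defined operations on the equivalence classes, and verify that the gluings on the two sides of each identity are matched by a single isomorphism. Once this setup is in place, associativity and commutativity become structural observations about how sources and sinks are shared rather than computations, and I would not expect any further difficulty.
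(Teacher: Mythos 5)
Your proposal is correct and follows essentially the same route as the paper's proof: verify closure (so each structure is a magma), then associativity of $\seqop$ and $\parop$, then commutativity of $\parop$. The only difference is one of detail—the paper simply cites associativity and commutativity as known properties of the composition operators, whereas you derive them from the gluing definitions together with the label- and role-preserving isomorphism convention, which is a sound and slightly more self-contained rendering of the same argument.
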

\begin{proof}
	First, observe that for any two elements $g, g' \in 
	\spset{\basicact}^{seq}$ (resp. 
	$g, g' \in 
	\spset{\basicact}^{par}$), it follows that $g \seqop g' \in 
	\spset{\basicact}^{seq}$ (resp. 
	$g \parop g' \in 
	\spset{\basicact}^{par}$). Hence each algebraic structure is a magma. 
	Second, because both $\seqop$ and $\parop$ are associative, then both 
	algebraic 
	structures are semigroups. Commutativity of the structure 
	$(\spset{\basicact}^{par}, \parop)$ follows from the commutativity property 
	of $\parop$. 
\end{proof}

\begin{corollary}
The algebraic 
structure 
$(\powerset{\spset{\basicact}^{seq}}, \seqop, \cup)$ is a semiring without 
multiplicative identity, where $\powerset{\cdot}$ denotes the powerset. 
That is, $\seqop$ and $\cup$ are associative operators, $\cup$ is commutative, 
and $\seqop$ distributes over $\cup$. 
Similarly, $(\powerset{\spset{\basicact}^{par}}, \parop, 
\cup)$ is a commutative semiring without multiplicative identity. 
The only difference between sets in parallel composition and sets in 
sequential composition is that $\parop$ is commutative while $\seqop$ is not. 
\end{corollary}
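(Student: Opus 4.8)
The plan is to recognise the statement as the standard fact that the powerset of a semigroup, equipped with the \emph{complex} (setwise) product and with union as addition, forms an additively-idempotent semiring; everything then reduces to lifting the element-level axioms established in the preceding proposition. Concretely, on $\powerset{\spset{\basicact}^{seq}}$ the product of $A$ and $B$ is $A \seqop B = \{a \seqop b \mid a \in A,\ b \in B\}$ and the addition is $\cup$, and the parallel case is identical with $\parop$ in place of $\seqop$. Closure of these operations inside the powerset is immediate: $\cup$ of two subsets is a subset, and $A \seqop B \subseteq \spset{\basicact}^{seq}$ because $\spset{\basicact}^{seq}$ is closed under $\seqop$ (it is a semigroup by the preceding proposition).

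First I would dispatch the additive structure: $\cup$ is associative and commutative with $\emptyset$ as identity, and $\emptyset$ is simultaneously a multiplicative zero since $A \seqop \emptyset = \emptyset = \emptyset \seqop A$ (there are no pairs to form). Next I would derive associativity of the setwise product directly from associativity of the element operation: both $(A \seqop B) \seqop C$ and $A \seqop (B \seqop C)$ unfold to the single set $\{a \seqop b \seqop c \mid a \in A,\ b \in B,\ c \in C\}$, where the unbracketed triple product is well defined precisely because $\seqop$ is associative on $\spset{\basicact}^{seq}$. The same computation with $\parop$ yields associativity in the parallel case, and commutativity of $\parop$ on sets follows from element-level commutativity, since $\{a \parop b\}$ and $\{b \parop a\}$ coincide as $(a,b)$ ranges over $A \times B$.

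The crux is distributivity, and the one point that genuinely needs care is that, because $\seqop$ is non-commutative, left and right distributivity must be checked separately. Expanding the definition, $A \seqop (B \cup C) = \{a \seqop d \mid a \in A,\ d \in B \cup C\}$ splits along the union into $(A \seqop B) \cup (A \seqop C)$, and symmetrically $(B \cup C) \seqop A = (B \seqop A) \cup (C \seqop A)$; for $\parop$ the two identities collapse into one by commutativity.

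Finally, to justify the qualifier \emph{without multiplicative identity}, I would argue by contradiction that no $E$ satisfies $A \seqop E = A$ for every $A$: testing a singleton $A = \{g\}$ would force $g \seqop e = g$ for some SP graph $e \in E$, which is impossible because sequential (and likewise parallel) composition with any SP graph strictly increases the number of edges, every SP graph having at least one edge. Overall I expect no real obstacle — the lifting is mechanical — so the only things to stay vigilant about are the doubled distributivity check forced by the non-commutativity of $\seqop$ and the appeal to closure of $\spset{\basicact}^{seq}$ needed to keep the product within the powerset.
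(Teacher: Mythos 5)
Your proof is correct and follows exactly the route the paper intends: the corollary is stated there without proof, as the standard lifting of the preceding semigroup proposition to the powerset equipped with the complex (setwise) product and union, which is precisely the mechanical verification you carry out, including the separate left/right distributivity checks forced by the non-commutativity of $\seqop$. Your final argument that no multiplicative identity can exist (via the observation that composing with any SP graph, which necessarily has at least one edge, strictly increases the edge count) goes slightly beyond the paper, which merely lists the axioms that hold; it is a correct strengthening rather than a deviation.
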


Even though algebraic factorisation has been extensively studied and 
continues 
to be an 
active research area in algebra and number theory, we are not aware of a 
factorisation procedure for a (commutative) semiring without multiplicative 
identity. On the one 
hand, 
there exist many 
factorisation techniques for 
polynomials over 
finite fields, but they are not idempotent. On the other hand, many works 
have 
addressed 
the minimisation of 
circuit design~\cite{BRSW1987} via factorisation of logic functions, which 
are 
idempotent. However, logic functions 
have a cancelling property not present in 
$(\powerset{\spset{\basicact}^{par}}, 
\seqop, \cup)$, namely $a 
\wedge a = a$ and $(a \wedge b) \vee a = a$, and a commutative property not 
present 
in 
$(\powerset{\spset{\basicact}^{seq}}, \parop, \cup)$. 
This leaves, to the best of our knowledge, the problem of factorising 
expressions in $(\powerset{\spset{\basicact}^{seq}}, 
\seqop, \cup)$ and $(\powerset{\spset{\basicact}^{par}}, 
\parop, \cup)$ open. We start addressing this problem by showing that it 
does not have a unique solution.

	\begin{proposition}
		The factored form of a set of SP graphs is not necessarily unique. 
	\end{proposition}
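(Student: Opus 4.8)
The plan is to refute uniqueness by exhibiting a single set of SP graphs that admits two distinct factored forms. A one-letter alphabet already suffices, which has the pleasant side effect of exposing the phenomenon as an instance of non-unique factorisation of integer polynomials with $0/1$ coefficients---precisely the algebraic analogy this section builds on. Fix $\basicact = \{a\}$ and, for $k \geq 1$, write $a^{k}$ for the linear (sequential) SP graph consisting of $k$ edges all labelled $a$, so that $a^{i} \seqop a^{j} = a^{i+j}$ and $\sem{a} = \{a^{1}\}$.

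First I would take the set
\[
\spgraphset = \{a^{2}, a^{3}, a^{4}, a^{5}, a^{6}, a^{7}\}
\]
and present two factorisations of it. Let $t_{1}, t_{2}$ be trees with $\sem{t_{1}} = \{a, a^{2}, a^{3}\}$ and $\sem{t_{2}} = \{a, a^{4}\}$, and let $t_{1}', t_{2}'$ be trees with $\sem{t_{1}'} = \{a, a^{2}\}$ and $\sem{t_{2}'} = \{a, a^{3}, a^{5}\}$; such trees plainly exist (e.g.\ as $\OR$-nodes over the relevant traces). A direct computation of the set-level sequential composition gives $\sem{t_{1}} \seqop \sem{t_{2}} = \spgraphset = \sem{t_{1}'} \seqop \sem{t_{2}'}$. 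Under the correspondence $a^{k} \leftrightarrow x^{k}$ this is exactly the polynomial identity $(1+x+x^{2})(1+x^{3}) = (1+x)(1+x^{2}+x^{4})$ with each of the two factors on either side multiplied by $x$; in particular the composition is collision-free. Hence both $\epsilon \refOp \SAND(t_{1}, t_{2})$ and $\epsilon \refOp \SAND(t_{1}', t_{2}')$ are factorisations of $\spgraphset$, where $\epsilon$ is a placeholder label since the factored form does not depend on labelling.

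It then remains to check that both trees are in factored form and that they are genuinely different. Irreducibility of the four factors is immediate from Proposition~\ref{prop1}: each of $\{a, a^{2}, a^{3}\}$, $\{a, a^{4}\}$, $\{a, a^{2}\}$ and $\{a, a^{3}, a^{5}\}$ contains the simple graph $a$. The two trees are distinct because their children differ as sets of SP graphs---for instance the three-element factors $\{a, a^{2}, a^{3}\}$ and $\{a, a^{3}, a^{5}\}$ are unequal. The one real subtlety, and where I would be careful, is the interplay between these two requirements: a counterexample to uniqueness must use factors that are themselves irreducible (otherwise one is merely rewriting the same expression rather than displaying a second factored form), and the two trees must not coincide up to associativity or any commutativity. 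Here the former is handed to us directly by Proposition~\ref{prop1} since every factor contains a simple graph, and the latter cannot happen because $\seqop$ is non-commutative and the child-sets simply differ; so no further bookkeeping is needed.
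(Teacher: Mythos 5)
Your proof is correct, but it takes a genuinely different route from the paper's. The paper argues abstractly that both semirings $(\powerset{\spset{\basicact}^{par}}, \parop, \cup)$ and $(\powerset{\spset{\basicact}^{seq}}, \seqop, \cup)$ are idempotent, invokes the known fact that idempotent semirings lack unique factorisation, and illustrates this with the parallel-composition identity $\{a,b\} \parop \{a,b\} \parop \{a,b\} = (\{a \parop a\} \cup \{b \parop b\}) \parop \{a,b\}$ over a two-letter alphabet --- an identity whose validity depends essentially on idempotence, since in the polynomial analogue the cross terms carry coefficient $3$ on the left and $1$ on the right, and only the absorption $S \cup S = S$ makes them agree. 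Your counterexample instead lives in the sequential semiring over a single label and lifts the classical $\mathbb{N}[x]$ identity $(1+x+x^{2})(1+x^{3}) = (1+x)(1+x^{2}+x^{4})$; as you note, it is collision-free, so it needs no idempotence at all. This buys a slightly sharper conclusion: non-uniqueness of the factored form is not an artifact of idempotent absorption but is already forced by $0/1$-coefficient polynomial identities, and it holds in the non-commutative sequential setting, not just the commutative parallel one the paper exhibits. You are also more careful than the paper on one point: you explicitly verify, via Proposition~\ref{prop1}, that all four factor sets are irreducible (each contains the simple graph $\xrightarrow{a}$), so both trees really are in factored form as defined; the paper leaves the irreducibility of its factor $\{a \parop a\} \cup \{b \parop b\}$ implicit. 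Your closing check that the two trees are genuinely distinct (the child semantics differ as sets, so no associativity or commutativity rewriting can identify them) completes the argument; no gap remains.
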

	\begin{proof}
		Notice that both semirings, namely 
		$(\powerset{\spset{\basicact}^{par}}, 
		\parop, \cup)$ and $(\powerset{\spset{\basicact}^{seq}}, 
		\seqop, \cup)$, are idempotent because their addition 
		operator 
		is 
		the set-union 
		operator, which satisfies that $S \cup S = S$.

		Idempotent semirings have no 
		unique 
		factorisation 
		\cite{agudelo2018polynomials}, hence 
		\SAND{} 
		trees have no unique factorisation either. For example, by mere 
		algebraic 
		manipulation we obtain two different 
		factorisations for the same set of SP graphs 
		$\{a \parop a \parop a,\ a \parop a 
		\parop b,\ a \parop b \parop b,\ b \parop b \parop b\} $\footnote{This 
			example 
			is inspired by a similar example on idempotent semirings given in 
			\cite{agudelo2018polynomials}.}.

		\begin{align*}
			\{a, b\}\parop\{a, b\}\parop\{a, b\} & = \{a \parop a \parop a,\ a 
			\parop a 
			\parop b,\ a \parop b \parop b,\ b \parop b \parop b\} \\
			& = (\{a \parop a\} \parop \{a, b\}) \cup (\{b \parop b\} \parop 
			\{a, 
			b\})\\
			& = (\{a \parop a\} \cup \{b \parop b\}) \parop \{a, b\}
		\end{align*}

	\end{proof}

	\subsection{A factorisation heuristic for expressions in idempotent 
	semirings}
	
	Our goal next is to introduce a factorisation algorithm for 
	expressions in the idempotent semiring $(\powerset{X}, \cdot, \cup)$, where 
	$(X, 
	\cdot)$ is 
	a semigroup and 
	$P \cdot Q = 
	\{p \cdot q | p 
	\in P, q \in Q\}$, which shall be used to factorise SP graphs 
	straightforwardly. For 
	simplicity, given that the reader might be familiar 
	with standard algebraic manipulations, we use $x_1 
	\cdots x_n$ as a shorthand notation for $\{x_1 \cdot 
	\ldots \cdot x_n 
	\} \in \powerset{X}$, $x^n$ instead of $\underbracket{x \ldots 
		x}_n$, and $a + 
	b$ as a shorthand notation 
	for 
	$a \cup b$ with $a, b \in \powerset{X}$. This gives, for example, $a^2 + ab 
	+ 
	b^2$ instead of $\{\{a \cdot a\}, \{a \cdot b\}, \{b \cdot b\}\}$. We 
	remark, 
	nonetheless, that the product operator here should not be assumed to be 
	commutative. For example, $aba$ 
	is not equal to $a^2b$, unless $\cdot$ is commutative.

	The solution we propose is inspired by the factorisation 
	procedure for logic functions introduced in \cite{BRSW1987}. 
	Let 
	us start by introducing the necessary notations. 
	
	\begin{itemize}
		\item A \emph{cube} is a set 
		$\{x_1 \cdot 
		\ldots \cdot x_n 
		\} \in \powerset{X}$ of cardinality one denoting the product $x_1 
		\cdots x_n$. It is the simplest algebraic 
		expression one can form, such as 
		$b^2$ and $ab$ with $a, b \in X$. 
		\item An expression in the Sum of Product form (SoP) is an expression 
		of the 
		form 
		$c_1 + \ldots + c_n$ where $c_1, \ldots, c_n$ are cubes.
		\item The \emph{length} of an expression $f$ is the minimum positive 
		integer $n$ 
		such that there exists an SoP expression $c_1 + \ldots + c_n$ equal 
		to 
		$f$.
		\item $x$ is said to be a left-divisor of $f$ if 
		$y$ is the largest expression, called the quotient, such that $f = x 
		\cdot 
		y + r$ for some $r$ called the remainder of the division. 
		For example, $a$ is a left-divisor of 
		$a^2+ab + b^2$ with quotient $a + b$ and remainder $b^2$. Notice that $b$ is not a left-divisor of $a^2+ab + b^2$, unless $\cdot$ was 
		commutative.
		\item $x$ is said to be a right-divisor of $f$ if 
		$y$ is the largest expression, called the quotient, such that $f = y 
		\cdot 
		x + r$ for some $r$ called the remainder of the division. 
		For example,
		$b$ is 
		a 
		right-divisor of $a^2+ab + b^2$ with quotient $a + b$ and remainder $a^2$. 
		Notice that $a$ is not a right-divisor of $a^2+ab + b^2$, unless $\cdot$ was 
		commutative.

	\end{itemize}

	\begin{proposition}
		Let $f$ be an expression in SoP. Let $R$ be the set of all pairs $(x, 
		y)$ 
		such that $x \cdot 
		y
		= c$ with $c$ a cube in $f$. Let $\pi_{1}(R, z) = \{y | (z, y) \in R\}$ 
		and 
		$\pi_{2}(R, z) = \{x | (x, z) \in R\}$ be 
		the 
		projections of the first and second component of $R$, respectively, 
		onto $z$. 
		If $\{x_1, \ldots, x_n\}$ is a set of maximum cardinality such that 
		$\pi_{1}(R, x_1) \cap \cdots \cap \pi_{1}(R, x_n)$ is non-empty, then 
		$x_1 + 
		\cdots + x_n$ is 
		the quotient of the right-division of $f$ by $\pi_{1}(R, x_1) \cap 
		\cdots \cap 
		\pi_{1}(R, x_n)$. Likewise, if $\{y_1, \ldots, y_n\}$ is a set of 
		maximum 
		cardinality such that 
		$\pi_{2}(R, y_1) \cap \cdots \cap \pi_{2}(R, y_n)$ is non-empty, then 
		$y_1 + 
		\cdots + y_n$ is 
		the quotient of the left-division of $f$ by $\pi_{2}(R, y_1) \cap 
		\cdots \cap 
		\pi_{2}(R, y_n)$.
	\end{proposition}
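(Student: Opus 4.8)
The plan is to characterise the quotient of the right-division directly from its definition and then to pin it down using the maximality hypothesis. Write $D = \pi_{1}(R, x_1) \cap \cdots \cap \pi_{1}(R, x_n)$ for the divisor, which is non-empty by assumption. First I would unfold what it means for a cube $q$ to occur in the quotient $Q$ of the right-division of $f$ by $D$. By the definition of division, $Q$ is the largest expression such that $f = Q \cdot D + r$ for some remainder $r$; equivalently, $Q$ is the sum of all cubes $q$ for which every product $q \cdot d$ with $d \in D$ is a cube of $f$. Since $q \cdot d$ being a cube of $f$ is by definition the statement $(q, d) \in R$, the condition ``$q \cdot d$ is a cube of $f$ for all $d \in D$'' is equivalent to $D \subseteq \pi_{1}(R, q)$. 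Hence the quotient is $Q = \sum \{q \mid D \subseteq \pi_{1}(R, q)\}$, and it remains to prove that this set of cubes equals $\{x_1, \ldots, x_n\}$.

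Next I would verify the easy inclusion. For each $i$, the intersection defining $D$ has $\pi_{1}(R, x_i)$ as one of its factors, so $D \subseteq \pi_{1}(R, x_i)$; therefore every $x_i$ satisfies the membership condition and occurs in $Q$. This shows that each $x_i$ is a cube of the quotient, i.e.\ $x_1 + \cdots + x_n$ is contained in $Q$.

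The heart of the argument is the reverse inclusion, which is where the maximality hypothesis enters and which I expect to be the main obstacle. Suppose $q$ is any cube with $D \subseteq \pi_{1}(R, q)$. Consider the enlarged set $\{x_1, \ldots, x_n, q\}$; its associated intersection of projections is $\pi_{1}(R, x_1) \cap \cdots \cap \pi_{1}(R, x_n) \cap \pi_{1}(R, q) = D \cap \pi_{1}(R, q) = D$, which is non-empty. If $q$ differed from every $x_i$, this would be a set of cardinality $n + 1$ whose projections still have a common element, contradicting the assumption that $\{x_1, \ldots, x_n\}$ has maximum cardinality among such sets. Hence $q$ coincides with some $x_i$, so every cube of $Q$ already appears in $x_1 + \cdots + x_n$. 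Combining the two inclusions gives $Q = x_1 + \cdots + x_n$. The statement for left-division follows by the symmetric argument, using $\pi_{2}$ in place of $\pi_{1}$ and placing the common left-factors on the left. The only delicate point is the interplay between ``largest quotient'' and set-maximality: the non-emptiness of $D$ is precisely what legitimises the enlargement step, since it guarantees that the extended intersection is still witnessed by a common right-factor.
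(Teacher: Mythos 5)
Your proof is correct and follows essentially the same route as the paper's: first establishing that $f = (x_1 + \cdots + x_n)\cdot D + r$ holds for some remainder $r$ by the definition of $R$, and then invoking the maximum cardinality of $\{x_1, \ldots, x_n\}$ to conclude it is the quotient. Your write-up is in fact more complete than the paper's terse two-sentence argument, since you explicitly characterise the quotient as the sum of all cubes $q$ with $D \subseteq \pi_{1}(R, q)$ and carry out the enlargement-to-contradiction step (adjoining $q$ would give a qualifying set of cardinality $n+1$) that the paper merely asserts.
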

	\begin{proof}
		We just need to prove one of the implications, since the other one is 
		analogous. 
		Let $y = \pi_{1}(R, x_1) \cap \cdots \cap 
		\pi_{1}(R, x_n)$. On the one hand, 
		by definition of $R$, we obtain that there must exist $r$ such that $f 
		= (x_1 
		+ 
		\cdots + x_n)\cdot y + r$. On the other hand, because $x_1 
		+ 
		\cdots + x_n$ is of maximum length, it is the quotient of the 
		right-division of 
		$f$ 
		by $y$. 

	\end{proof}

	Based on the proposition above, we build a simple heuristic (described in 
	Algorithm \ref{alg:factorisation}) to factorise 
	expressions in 
	$(\powerset{X}, \cdot, 
	\cup)$ by greedily looking for a quotient $x$ and a divisor $y$ such that 
	the 
	product of their length is maximum.
	
	\begin{algorithm}
		\caption{Factorisation algorithm over an idempotent 
			semiring}\label{alg:factorisation}
		\reducespacingalg
		\begin{algorithmic}[1]
			\Require An expression $f$ in SoP over an idempotent semiring 
			$(\powerset{X}, \cdot, \cup)$ where $(X, 
			\cdot)$ is 
			a semigroup and 
			$P \cdot Q = 
			\{p \cdot q | p 
			\in P, q \in Q\}$. 
			\Ensure A factorisation of $f$ of the form $g_1 \cdots g_n + r$ 
			where 
			$g_1, 
			\ldots, g_n$ 
			are all irreducible.
			\Function{ExpFactorisation}{$f$}
			\State Let $R$ be the set of all pairs $(x, y)$ 
			such that $x \cdot 
			y
			= c$ with $c$ a cube in $f$. 
			\If{$R$ is empty}
			\Return $f$
			\EndIf
			\State Let $X$ and $Y$ be an empty sets
			\While{$R$ is non empty}
			\State Let $x_{max}$ such that $\pi_{1}(R, x_{max})$ has maximum 
			cardinality
			\State Let $y_{max}$ such that $\pi_{2}(R, y_{max})$ has maximum 
			cardinality
			\If{$|x_{max}| \geq |y_{max}|$}
			\State Add $x_{max}$ to $X$
			\State $Y = \pi_{1}(R, x_{max})$
			\State Remove all pairs $(x, y)$ in $R$ such that $y \not \in 
			\pi_{1}(R, x_{max})$ or $x = x_{max}$
			\Else
			\State Add $y_{max}$ to $Y$
			\State $X = \pi_{2}(R, y_{max})$
			\State Remove all pairs $(x, y)$ in $R$ such that $x \not \in 
			\pi_{2}(R, y_{max})$ or $y = y_{max}$
			\EndIf
			\EndWhile
			\If{$X = \emptyset$}
			\Return $f$ \Comment{No further factorisation is possible}
			\EndIf
			\State Let $X = \{x_1, \cdots, x_n\}$ and $x = x_1 + \cdots + x_n$
			\State Let $Y = \{y_1, \cdots, y_m\}$ and $y = y_1 + \cdots + y_m$
			\State Let $r$ such that $f = x \cdot y + r$ 
			\State Let $g_1 \cdots g_{m} + r$ and $g'_1 \cdots g'_{m'} + r'$ be 
			the output of the \textsc{ExpFactorisation} procedure (recursive 
			call) 
			on 
			input $x$ and 
			$y$, respectively.  
			\State \Return $x' \cdot y' + r$
			\EndFunction
		\end{algorithmic}
	\end{algorithm}
	
	To illustrate how Algorithm \ref{alg:factorisation} works, let us factorise 
	the 
	expression $a^3 + ba^2  + ab^2 + 
	b^3$ over $(\powerset{X}, \cdot, 
	\cup)$ where $(X, \cdot)$ is a commutative monoid. 
	We obtain that $R = \{(a, 
	a^2), (a,ab), (a,b^2), (a^2,a), (a^2,b), (b,a^2), (b,ab), (b,b^2), (b^2,b), 
	(b^2,a), 
	(ab,a), (ab,b)\}$. Therefore, $\pi_{1}(R, a) = \pi_{2}(R, a) = \{a^2, ab, 
	b^2\}$, $\pi_{1}(R, 
	b) = \pi_{2}(R, 
	b)
	\{a^2, ab, b^2\}$ and 
	$\pi_{1}(R, ab) = \pi_{2}(R, ab) = \{a, b\}$.  
	This means that Algorithm \ref{alg:factorisation} 
	should choose $x_{max}$ to be either $a$ or $b$. Let us assume it chooses 
	$b$. 
	Then $X = \{b\}$, $Y = \{a^2, ab, b^2\}$ and $R$ is updated by removing all 
	pairs whose second 
	element 
	is not 
	$a^2$, $ab$ or $b^2$. Any pair whose first element is $b$ is also removed. 
	Hence, 
	in the 
	second loop cycle $R = \{(a, 
	a^2), (a,ab), (a,b^2)\}$. In this case, $a$ is the only element whose 
	projection 
	$\pi_{1}(R, a) = \pi_{1}(R, a) = \{a^2, ab, b^2\}$ is non-empty. Hence $a$ 
	is 
	added to $X$ and 
	$R$ is updated accordingly. At this stage $R$ is empty and the algorithm 
	has 
	found the factorisation $(a+b)(a^2+ab+b^2)$. The algorithm recursively 
	calls 
	itself to factorise the expressions $a+b$ and $a^2+ab+b^2$ independently, 
	returning $(a+b)^3$.

	To finalise this example, let us now consider that the expression $a^3 
	+ ba^2 + ab^2 + 
	b^3$ is taken over $(\powerset{X}, \cdot, 
	\cup)$ where $(X, \cdot)$ is a (non-commutative) monoid. Notice that this 
	is 
	the same expression we factorised earlier; we are just dropping the 
	commutative property.
	We obtain that $R = \{(a, 
	a^2), (a,b^2),$ $(a^2,a), $ $(b,a^2), $ $(b,b^2), $ $(ba,a), 
	(ab,b), (b^2, b)\}$. Therefore, 
	$\pi_{1}(R, a) = \{a^2, b^2\}$, 
	$\pi_{2}(R, a) = \{a^2, ba,\}$,
	$\pi_{1}(R, b) = 
	\{a^2, b^2\}$,  
	$\pi_{2}(R, b) = 
	\{ab, b^2\}$,  
	$\pi_{1}(R, ba) = \{a\}$, 
	$\pi_{2}(R, ba) = \emptyset$, 
	$\pi_{1}(R, ab) = \{b\}$, 
	$\pi_{2}(R, ab) = \emptyset$, 
	$\pi_{1}(R, a^2) = \{b\}$, 
	$\pi_{2}(R, a^2) = \{a, b\}$, 
	$\pi_{1}(R, b^2) = \{b\}$, 
	and $\pi_{2}(R, b^2) = \{a, b\}$. 
	This means that Algorithm 
	\ref{alg:factorisation} chooses $x_{max}$ to be $a$ or $b$. Say $x_{max} = 
	a$.  
	Then $X = \{a\}$, $Y = \{a^2, b^2\}$ and $R$ is updated by removing all 
	pairs 
	whose second 
	element 
	is not 
	$a^2$ or $b^2$. Pairs whose first element is $a$ are also removed. Hence, 
	in the 
	second loop cycle $R = \{(b,a^2), (b,b^2)\}$. In this case, $b$ is the 
	element with the largest 
	projection 
	$\pi_{1}(R, b) = \{a^2, b^2\}$. Hence $b$ is added to $X$, $Y = \{a^2, 
	b^2\}$,  
	and 
	$R$ is updated accordingly. At this stage $R$ is empty and the algorithm 
	has 
	found the factorisation $(a+b)(a^2+b^2)$.

	\subsection{A factorisation heuristic for sets of homogeneous SP graphs} 
	
	Now we are ready to describe our 
	factorisation procedure for sets of homogeneous SP graphs via a 
	straightforward translation to an algebraic expression over the appropriate 
	semiring (see Algorithm \ref{alg:factorisation2}). The algorithm 
	works as follows. If $\spgraphset$ consists of a single SP graph, then a 
	specialised procedure called \textsc{BuildTree} is used to create a tree 
	whose 
	semantics is $\spgraphset$. This procedure uses an $\AND$ gate whenever 
	there 
	is a parallel composition operator in the SP graph, and a $\SAND$ gate 
	whenever 
	there 
	is a sequential composition operator in the SP graph. If $\spgraphset$ 
	contains 
	various SP graphs all in sequential composition, the algorithm proceeds to 
	create 
	a semiring $R = 
	(\powerset{X}, \seqop, \cup)$ where $X$ contains all SP graphs in parallel
	composition that are part of an SP graph in $\spgraphset$. Likewise, 
	if $\spgraphset$ is in parallel composition the algorithm creates 
	a semiring $R = 
	(\powerset{X}, \parop, \cup)$ that is commutative.
	The structure $R$ is used to factorise the expression $e$, which is the 
	representation of $\spgraphset$ as an SoP expression, by calling Algorithm 
	\ref{alg:factorisation}. If $e = f_1 \cdots f_n$, then the algorithm 
	creates a 
	tree of the form $\abasicact \refOp \Delta(t_1, \ldots, t_n)$ with $\Delta 
	= 
	\AND$ if $\spgraphset$ is in parallel composition, otherwise $\Delta = 
	\SAND$. Each attack tree $t_i$ is obtained by recursively calling 
	$\textsc{TreeFactorisation}$ on the set of SP graphs whose algebraic 
	representation is $f_i$. 
	If $e = f_1 \cdots f_n + r$, then we cannot create a tree in the factored 
	form. 
	Hence, we create a disjunction with two branches. One branch contains a 
	tree 
	whose semantics is $f_1 \cdots f_n$, and the other one a tree whose 
	semantics 
	is 
	$r$. Lastly, if $e = r$, meaning that $e$ consists of cubes with a single 
	element, or some of the sets of SP graphs $\spgraphset_1, \ldots, 
	\spgraphset_{n+1}$ has no common goal, 
	then we create a disjunction of depth 1 where each branch of the root node 
	is created by calling $\textsc{BuildTree}$ on input an SP graph  in 
	$\spgraphset$.
	 		
	\begin{algorithm}
		
		\caption{Factorisation algorithm for a set of SP graphs 
			of the same 
			type}\label{alg:factorisation2}
		
		\linespread{1.1}\selectfont
		
		\begin{algorithmic}[1]
			\Require A goal relation 
			$\sat$ and a non-empty set of homogeneous SP graphs that share a 
			common goal.

			\Ensure An optimally-labelled tree $\anatree$ with semantics 
			$\spgraphset$. The produced tree has 
			been 
			optimised for minimum 
			size via factorisation. 
			\Function{TreeFactorisation}{$\sat, \spgraphset$}
			\State Let $b$ be the output of \textsc{FindOptimalLabel} on input 
			$\spgraphset$
			\If{$\spgraphset = \{g\}$}
			\State \Return The output of the \textsc{BuildTree} procedure below 
			on input $(\sat, g)$ 
			\EndIf
			\State Let $X$ be an empty set
			\If{$\spgraphset$ is in parallel composition}
			\ForAll{$g_1 \parop \ldots \parop g_n \in \spgraphset$ where each 
				$g_i$ is not in parallel composition}
			\State Add $g_1, \ldots, g_n$ to $X$
			\EndFor
			\State Let $R = (\powerset{X}, \parop, \cup)$ be a
			semiring with $(X, \parop)$ a commutative semigroup.
			\Else
			\ForAll{$g_1 \seqop \ldots \seqop g_n \in \spgraphset$ where each 
				$g_i$ is not in sequential composition}
			\State Add $g_1, \ldots, g_n$ to $X$
			\EndFor
			\State Let $R = (\powerset{X}, \seqop, \cup)$ be a 
			semiring with $(X, \seqop)$ a semigroup.
			\EndIf

			\State Let $e$ be the $R$-algebraic expression in SoP obtained from 
			$\spgraphset$
			\State Call Algorithm 
			\ref{alg:factorisation} (\textsc{ExpFactorisation}) to factorise 
			$e$ 
			into $f_1 \cdots f_n + r$.
			\State Let $\spgraphset_i$ be a set of SP graphs in $R$ whose 
			algebraic representation is $f_i$
			\State Let $\spgraphset_{n+1}$ be a set of SP graphs in $R$ whose 
			algebraic representation is $r$
			\If{$n = 0$ or $\exists \spgraphset_i$ s.t. its SP graphs have no 
				common 
				goal}\label{step-no-common-goal}
			\State Let $\{g_1, \ldots, g_m\} = \spgraphset$

			\State $\forall g_i$, $t_i$ is the output of 
			\textsc{BuildTree} on input $(\sat, 
			g_i)$
			\State \Return $b \refOp \OR(t_1, \ldots, t_m)$ \Comment{A flat 
				tree is built}	 
			
			\ElsIf{$e$ is factorised as $f_1 \cdots f_n$ with $n>1$} 
			\Comment{$r$ 
				is 
				the empty expression}

			\label{step-goal-optimisation4} 
			\State $\forall \spgraphset_i$, $t_i$ is the output of 
			\textsc{TreeGeneration} (Algorithm \ref{alg:gen}) on input $(\sat, 
			\spgraphset_i)$
			\If{$\spgraphset$ is in parallel composition}
			\Return $b \refOp \AND(t_1, \ldots, t_n)$
			\Else
			\ \Return $b \refOp \SAND(t_1, \ldots, t_n)$
			\EndIf
			\ElsIf{$e$ is factorised as $f_1 \cdots f_n + r$ with $n>1$} 
			\Comment{A 
				partial 
				factorisation
			}

			\label{step-goal-optimisation6} 
			\State $t_l$ is the output of \textsc{TreeGeneration} on input 
			$(\sat, 
			\spgraphset_1 \cup \cdots \cup \spgraphset_n)$
			\State $t_r$ is the output 
			\textsc{TreeGeneration} on input 
			$(\sat, 
			\spgraphset_{n+1})$
			\If {$t_r \equiv b_r \refOp \OR(t_1, \ldots, t_m)$}
			\Return $b \refOp \OR(t_l, t_1, \ldots, t_m)$ 
			\Else 
			\ \Return $b \refOp \OR(t_l, t_r)$
			\EndIf
			
			\EndIf
			\EndFunction
		\end{algorithmic}
	\end{algorithm}
	
	\begin{algorithm}
		\caption{An algorithm that builds an optimally-labelled tree from a 
			single SP graph.}\label{alg:factorisation3}
		\reducespacingalg
		\begin{algorithmic}[1]
			\Require A  goal relation 
			$\sat$ and an SP graph $g$.
			\Ensure An optimally-labelled tree $\anatree$ with 
			semantics $\{g\}$. 
			\Function{BuildTree}{$\sat, g$}
			\State Let $b$ be the output of 
			\textsc{FindOptimalLabel} on input 
			$\{g\}$ 
			\If{$g$ is a simple graph}
			\Return the single-node \SAND{} tree $b$ \Comment{Here $g \equiv b$}
			\EndIf
			\If{$g$ is in parallel composition}
			\State Let $g = g_1 \parop \ldots \parop g_n$ with $n$ maximum
			\Else
			\State Let $g = g_1 \seqop \ldots \seqop g_n$ with $n$ maximum
			\EndIf

			\State $\forall g_i$, $t_i$ is the output of \textsc{BuildTree} on 
			input $(\sat, g_i)$
			\If{$g$ is in parallel composition}
			\Return $b \refOp \AND(t_1, \ldots, t_n)$
			\Else
			\ \Return $b \refOp \SAND(t_1, \ldots, t_n)$
			\EndIf
			\EndFunction
		\end{algorithmic}
	\end{algorithm}
	
	This algorithm is rather cumbersome, hence let us run it by using the 
	semantics 
	of the attack tree displayed in Figure~\ref{fig:example}. 
	The syntax of such a tree 
	is 
	$
	\access\refOp\SAND(
	\credential\refOp\OR(
	\eavesdropuser\refOp\SAND(
	\wait,
	\eavesdropcon),
	\brute,
	\exploit
	),
	\dologin)
	$, where the symbols used are shorthand notations that can be found in the 
	figure itself. 
	It follows that the semantics of such tree is 	
	$\{\xrightarrow{w}\xrightarrow{ec}\xrightarrow{l},
	\xrightarrow{b}\xrightarrow{l}, \xrightarrow{x}\xrightarrow{l}\}$.
	Now, to run the generation algorithm we need a relation between 
	attacks and labels. Let us create the relation that trivially follows from 
	the 
	tree itself. 
	\begin{itemize}
		\item $\xrightarrow{\ell} \sat \ell$ for all labels $\ell$, i.e. the 
		execution of 
		$\ell$ indeed achieves $\ell$
		\item $\xrightarrow{w} \xrightarrow{ec} \sat 
		\eavesdropuser$ 
		and 
		$\xrightarrow{w} \xrightarrow{ec} \sat \credential$, 
		because waiting until the user connects to later eavesdrop his 
		credential 
		achieves the goal 
		of eavesdropping and the goal of obtaining the user's credential. 
		\item $\xrightarrow{b} \sat \credential$ and $\xrightarrow{x} 
		\sat \credential$, because bruteforce and exploitation are independent 
		ways to obtain a credential 
		\item $\xrightarrow{w} \xrightarrow{ec} 
		\xrightarrow{l} \sat \access$
		\item $\xrightarrow{b} 
		\xrightarrow{l} \sat \access$
		\item $\xrightarrow{x} 
		\xrightarrow{l} \sat \access$
	\end{itemize}
	
	The last three relations are inferred from the attack-tree semantics and 
	the 
	root goal $\access$. Now, let us run Algorithm 
	\ref{alg:factorisation2}  on 
	input $(\sat, \{\xrightarrow{w}\xrightarrow{ec}\xrightarrow{l},
	\xrightarrow{b}\xrightarrow{l}, \xrightarrow{x}\xrightarrow{l}\})$, which 
	is a homogeneous set of SP graphs with a common goal $\access$.

	Because $\spgraphset$ is in sequential composition, Algorithm 
	\ref{alg:factorisation2} creates the set $X = \{\xrightarrow{w}, 
	\xrightarrow{ec}, \xrightarrow{l},
	\xrightarrow{b}, \xrightarrow{x}\}$ and the algebraic structure 
	$(\powerset{X}, 
	\seqop, \cup)$. From $\spgraphset$, it creates the expression $e = 
	\xrightarrow{w}\xrightarrow{ec}\xrightarrow{l} + 
	\xrightarrow{b}\xrightarrow{l} + \xrightarrow{x}\xrightarrow{l}$. By 
	calling 
	Algorithm \ref{alg:factorisation}, $e$ is factorised as 
	$(\xrightarrow{w}\xrightarrow{ec} + 
	\xrightarrow{b} + \xrightarrow{x})\xrightarrow{l}$. 
	Next, the algorithm looks for an optimal label
	common to the attacks $\xrightarrow{w}\xrightarrow{ec}, 
	\xrightarrow{b}, \xrightarrow{x}$. Such optimal label is $\credential$, and 
	it 
	is unique. Therefore, the algorithm calls recursively itself to create a 
	tree 
	with root label $\credential$ and semantics 
	$\{\xrightarrow{w}\xrightarrow{ec}, 
	\xrightarrow{b}, \xrightarrow{x}\}$. Let us call such tree $t_1$. Likewise, 
	the 
	optimal label for $\{\xrightarrow{l}\}$ is found to be $l$. The attack tree 
	with root label $l$ and semantics  $\{\xrightarrow{l}\}$ is just the simple 
	tree $l$. Therefore, the algorithm returns the tree $\access \refOp 
	\SAND(t_1, 
	l)$. 
	
	Now, let us see how Algorithm \ref{alg:factorisation2} handles the input 
	$(\sat, \{\xrightarrow{w}\xrightarrow{ec}, 
	\xrightarrow{b}, \xrightarrow{x}\}$. 
	In this case, $X = \{\xrightarrow{w}, \xrightarrow{ec}, 
	\xrightarrow{b}, \xrightarrow{x}\}$ and $e = \xrightarrow{w}
	\xrightarrow{ec} + 
	\xrightarrow{b} + \xrightarrow{x}$. The factorisation of $e$ gives $f_1 = 
	\xrightarrow{w}$, $f_2 = \xrightarrow{ec}$ and $r = \xrightarrow{b} + 
	\xrightarrow{x}$. Therefore, we are in the case where only a partial 
	factorisation was found. 
	The \SAND{} tree corresponding to the remainder 
	$\xrightarrow{b} + 
	\xrightarrow{x}$ is $c \refOp \OR(b, x)$, because $c$ is the shared goal of 
	both attack steps. Likewise, the \SAND{} tree for 
	$\xrightarrow{w}\xrightarrow{ec}$ is $c \refOp \SAND(w, ec)$. This means 
	that 
	the output of the algorithm on input 
	$(\sat, \{\xrightarrow{w}\xrightarrow{ec}, 
	\xrightarrow{b}, \xrightarrow{x}\}$ is the \SAND{} tree 
	$c \refOp \OR(c \refOp \SAND(w, ec), b, x)$, which coincides with the 
	attack 
	tree in Figure \ref{fig:example}.   
	
		\begin{theorem}
		Algorithm \ref{alg:factorisation2} produces optimally-labelled trees. 
	\end{theorem}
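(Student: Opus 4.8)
The plan is to prove the statement by a single well-founded induction on the size of the input set $\spgraphset$ (for concreteness, its total edge count), carried out simultaneously over the three mutually recursive procedures \textsc{TreeFactorisation}, \textsc{BuildTree} and \textsc{TreeGeneration}. Every recursive call is made on a strictly smaller input that, by construction, shares a common goal, so the induction hypothesis supplies optimality of all three procedures on smaller inputs; the already-established Theorem~\ref{theo-alg-correct1} then bridges the \textsc{TreeGeneration} calls back to \textsc{TreeFactorisation}. To close the induction at the root I would discharge the two clauses of Definition~\ref{def-labelling} together with an auxiliary semantic-preservation lemma.

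First I would prove the auxiliary lemma that the returned tree $\anatree$ satisfies $\sem{\anatree} = \spgraphset$. This is a case analysis over the branches of Algorithm~\ref{alg:factorisation2} and Algorithm~\ref{alg:factorisation3}, using the correctness of \textsc{ExpFactorisation}, namely that the factored expression $f_1 \cdots f_n + r$ denotes the same semiring element as the SoP expression $e$ built from $\spgraphset$, and the fact that the translation of $\seqop, \parop, \cup$ into the tree operators $\SAND, \AND, \OR$ is a homomorphism with respect to the SP semantics of Definition~\ref{def:sp-sem}. The single-graph base case reduces to checking that \textsc{BuildTree} reconstructs $g$ from its maximal parallel/sequential decomposition with leaf labels matching the edge labels; the partial-factorisation branch requires verifying $\sem{\abasicact \refOp \OR(t_l, t_1, \ldots, t_m)} = \sem{t_l} \cup \sem{t_r}$ after the flattening step, which holds because $\sem{t_r} = \sem{t_1} \cup \cdots \cup \sem{t_m}$.

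Next, for clause~(2) of Definition~\ref{def-labelling} I would use that the semantics of a composite tree is independent of its root label: for any candidate label $\abasicact'$ we have $\length{\abasicact' \refOp \treeop(\anatree_1,\ldots,\anatree_n)} = |\attacksfunc{\abasicact'} \setminus \sem{\anatree}|$, and correct labelling at the root demands $\sem{\anatree} \subseteq \attacksfunc{\abasicact'}$. Minimising the loss subject to correctness is therefore minimising $|\attacksfunc{\abasicact'}|$ subject to $\sem{\anatree} \subseteq \attacksfunc{\abasicact'}$, which is exactly what \textsc{FindOptimalLabel} computes after substituting $\sem{\anatree} = \spgraphset$ from the lemma; this simultaneously certifies that the root is correctly labelled. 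Clause~(1) then follows from the induction hypothesis, since each proper subtree is the output of a recursive call on a smaller common-goal input, and the no-common-goal test at line~\ref{step-no-common-goal} guarantees that the fall-back flat tree is emitted whenever a recursive call would otherwise lack a common goal, keeping \textsc{FindOptimalLabel} well-defined and ensuring an optimally-labelled tree exists for every child.

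I expect the main obstacle to be the partial-factorisation branch, where the result is formed by flattening $t_r$ into the top-level disjunction. There I must argue that the flattened children remain optimally labelled: they are precisely the (optimally-labelled, by induction) subtrees of $t_l$ and of $t_r$, and discarding the intermediate root of $t_r$ does not alter its subtrees, so clause~(1) is preserved, leaving only the optimality of the fresh root label $\abasicact$, which is handled as above. A secondary difficulty is justifying well-foundedness across the three interleaved procedures, for which I would exhibit the single total-edge-count measure and check that it strictly decreases on every recursive call.
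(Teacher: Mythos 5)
Your proposal is correct, and it essentially subsumes the paper's own argument, which is considerably terser. The paper sets up no induction and no semantics-preservation lemma at all: it reduces the theorem to the claim that a tree can always be produced, observing (i) that every root label comes from \textsc{FindOptimalLabel}, whose optimality --- your clause-(2) argument that minimising $\length{\cdot}$ over correctly-labelling root labels amounts to minimising $|\attacksfunc{b}|$ subject to $\spgraphset \subseteq \attacksfunc{b}$, legitimate precisely because $\sem{\cdot}$ is independent of the root label --- is taken as already established in the proof of Theorem \ref{theo-alg-correct1}; and (ii) that the common-goal invariant is maintained by the flat-tree fallback at line \ref{step-no-common-goal} together with totality of $\sat$, so \textsc{FindOptimalLabel} is always well-defined. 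Everything you add on top of this skeleton --- the simultaneous well-founded induction over the three mutually recursive procedures, the auxiliary lemma $\sem{\anatree}=\spgraphset$, and the check that flattening $t_r$ into the top-level disjunction preserves clause (1) --- is left implicit in the paper, and your version is the more rigorous of the two; in particular, your explicit handling of clause (2) makes precise what the paper merely asserts by back-reference. Two small repairs are needed. First, your single total-edge-count measure does not strictly decrease on the \textsc{TreeGeneration}-to-\textsc{TreeFactorisation} handoff, which passes the same set $\spgraphset$ unchanged when it is homogeneous; since all of \textsc{TreeFactorisation}'s own recursive calls do strictly decrease the measure, a lexicographic tiebreak (edge count paired with a procedure flag) fixes this. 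Second, your reading of the partial-factorisation branch --- that $t_l$ covers the set denoted by the product $f_1 \cdots f_n$ --- follows the paper's prose rather than the pseudocode's literal input $\spgraphset_1 \cup \cdots \cup \spgraphset_n$; the union and the product denote different sets, so the pseudocode line is evidently a typo and your charitable reading is the one under which the semantics lemma holds.
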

	\begin{proof}
		As in the proof of Theorem \ref{theo-alg-correct1}, all we need to 
		show is that a tree can always be produced. The fact that the produced 
		tree is optimally labelled comes from the use of the 
		\textsc{FindOptimalLabel} procedure to create labels, which was already 
		shown to use optimal labels. Now, an optimally-labelled tree exists 
		whenever the set of SP graphs has a common goal. 
		In Step \ref{step-no-common-goal}, 
		we ensure that, if any of the sets $\spgraphset_1, \ldots, 
		\spgraphset_{n+1}$ has no common goal or if no factorisation 
		exists, 
		then our algorithm builds a flat tree where each branch of the root is 
		a tree $t_i$ with semantics $\{t_i\}$ where $\spgraphset = \{g_1, 
		\ldots, g_m\}$. 
		Because the goal relation satisfies that every SP graph $g$ satisfies 
		some goal and, by the premise of Algorithm \ref{alg:factorisation2}, 
		the SP graphs in $\spgraphset$ have a common goal, then we conclude 
		that Algorithm \ref{alg:factorisation2} always produce a tree, and that 
		such a tree is optimally labelled.

	\end{proof}
	
	\begin{remark}
		If the semiring is commutative, then the set $R$ in Algorithm 
		\ref{alg:factorisation} has exponential size in terms of the number of 
		cubes in 
		the expression. The size of each cube also contributes exponentially to 
		the 
		size of $R$. A workaround to this problem is assuming the semiring to 
		be 
		non-commutative based, for example, on a lexicographical order on $X$.
	\end{remark}

	\section{A system model to efficiently reason about attacks and goals}
	\label{sec-system}
	
	The input to the attack tree generation problem is a set of attacks and a goal relation. Previous work have used system models that can produce set of attacks, such as 
	\cite{Audinot2017,Audinot2018,GJMTW17,Pinchinat2016}, while others have focused on labels, such as 
	\cite{Bryans2020,Jhawar2018,GJMTW17,pinchinat2020library,groner2023model,Pinchinat-WFMDS-2014,pinchinat2016atsyra,Audinot2018,Audinot2017,Mantel2019}. But, to our knowledge, none can be used to efficiently find the optimal labelling of an attack tree with a given semantics. 
	The goal of this section is to introduce a 
	system model 
	capable of producing attacks, a non-trivial goal relation, and a procedure to efficiently find an optimal labelling for an attack tree.

	% Second, what sort of goal relation would allow for an efficient implementation 
	% of the procedure $\textsc{FindOptimalLabel}$? 
	% This section gives answers to both questions by introducing a class of labelled 
	% transition 
	% systems that expresses attacks and goals in a way that 
	% allows for implementing the procedure 
	% $\textsc{FindOptimalLabel}$ efficiently. 

	\subsection{Mixed Kripke Structure: the system model}
	
	A Kripke structure is a variation of the \emph{Labelled
		Transition Systems} (LTS), which are used to describe 
	the behaviour
	of a system by defining the transitions that bring a system from one
	state into another. Each state in a Kripke structure contains properties that 
	hold in the corresponding state, 
	such 
	as the knowledge of an attacker or the location of a secret document. 
	The transition system 
	progresses via application of transition rules until a final state is reached. 
	
	Formally, a \emph{Mixed kripke Structure} is a quadruple 
	$\ltstuple$, where $\states$ is a set of \emph{states}; $\alphabet$ is a set of 
	\emph{labels}; $\transitions \subseteq \states\times\alphabet\times\states$ is a
	\emph{transition relation}; 
	$\startstate \in\states$ is 
	the \emph{initial 
		state} and $\finalstates \subset \states$ is a set of final states. 
	We usually write $s \xrightarrow{a} s'$ to say that $(s, a, s') 
	\in \transitions$. And we say that 
	$s_0 
	\xrightarrow{a_0} s_1 \xrightarrow{a_1} s_2 \cdots s_{n-1} 
	\xrightarrow{a_{n-1}} s_n$ 
	is a path if $s_n \in \finalstates$  and 
	$s_i \xrightarrow{a_i} s_{i+1}$ for 
	every 
	$i 
	\in \{0, \ldots, n-1\}$. 
	Any portion of the path formed by one or more 
	consecutive transitions is called a \emph{subpath}. 
	When $s_0$ is a final state, then $s_0$ itself is 
	a path with no transition.
	We use $\tracesfunc{\anlts}$ to denote the set of all 
	paths
	generated by a labelled transition system $\anlts$.

	\paragraph{States} As mentioned earlier, states contain system properties, 
	which we formalise as predicates constructed over a signature
	$\sig$ (i.e.\ a 
	collection of function symbols). Function symbols of arity zero are called 
	constants and are written without parentheses, such the name of a system actor 
	(e.g. 
	$\alice$) or the 
	name 
	of an item (e.g. $\psw$ to refer to a password). Function symbols with non-zero 
	arity surround within parentheses their arguments, such as 
	$\knows(\alice,\psw)$ to denote that $\alice$ knows password
	$\psw$.

	\paragraph{The transition relation} The transition relation is defined through 
	\emph{transition rules}.
	Every transition rule is of the form $\sosrule{name}{C}{~s \xrightarrow{a} 
		f(s)~}$, and
	contains a condition $C$ and a conclusion $s \xrightarrow{a} f(s)$. The name of 
	a transition rule
	is given left of the line. The condition $C$ consists of zero or more 
	predicates $p$ that
	must be present in state $s$ to enable the transition rule. The conclusion $s 
	\xrightarrow{a} f(s)$
	describes the state change when the transition occurs, with $f$ a function 
	mapping old state $s$ to a new state $f(s)$. The arrow is labelled with the 
	event that describes the transition. The predicates may contain variables which 
	are implicitly universally quantified.

	\subsection{A network security example}
	
	Next, we illustrate how the introduced system model can be used to provide a 
	small 
	network security scenario with a formal specification.

	\paragraph{System description}
	The system example consists of a computer network where users log in to 
	password-protected servers. 
	We use $\users$ and $\passwords$ to denote 
	the set of users and passwords, respectively. Ownership of a password is 
	denoted by the predicate $\knows(\auser, \apassword)$, where $\auser \in 
	\users$ and $\apassword \in \passwords$. 
	The predicate 
	$\located(\auser, \amachine)$ denotes that the user $\auser \in \users$ has 
	logged into the server $\aserver \in \servers$.

	Let 
	$\accepts(\aserver, \apassword)$ be a predicate denoting that server $\aserver 
	\in \servers$ 
	accepts password $\apassword \in \passwords$. Then, a user $\auser$ can connect 
	to a server 
	$\aserver$ if the user knows a password that the server accepts. This is 
	expressed 
	by the following rule. 
	
	\[
	\sosrule{\loginuser}
	{
		\auser \in \users, \aserver \in \servers, \apassword \in \passwords, 
		\knows(\auser,\apassword),
		\accepts(\aserver, \apassword)
	}
	{\astate
		\xrightarrow{\loginuser(\auser,\aserver)}
		{\astate} \cup \{ \located(\auser,\aserver) \}
	}
	\]
	
	Up to here we have specified a system where regular users 
	can log into a server, provided they have the right credential. 
	The next step is to formalise possible actions that a malicious user can take. 
	We start specifying the 
	\exploiting\ rule, which allows a malicious user 
	to, via an exploit,
	create a 
	new credential on a server. 
	
	\[
	\sosrule{\exploiting}
	{\auser\in \users, \aserver \in \servers, \apassword 
		\in \passwords, \knows(\auser, \apassword)
	}
	{\astate
		\xrightarrow{\exploiting(\auser,\aserver)}
		{\astate} \cup \{\stores(\aserver, \apassword)\}
	}
	\]

	The 
	rule $\bruteforce$ specifies how an attacker can bruteforce a password and find 
	a 
	suitable credential on a server. 
	
	\[
	\sosrule{\bruteforce}
	{\auser \in \users, \apassword \in \passwords,
		\stores(\aserver,\apassword)
	}
	{\astate
		\xrightarrow{\bruteforce(\auser,\aserver)}
		{\astate} \cup \{ \knows(\auser,\apassword) \}
	}
	\]
	
	Lastly, the  
	rule $\eavesdrop$ expresses that a malicious user can eavesdrop on a connection 
	of 
	another user in order to find a credential 
	for 
	the server.
	
	\[
	\sosrule{\eavesdrop}
	{\auser, \auser' \in \users, \aserver \in \servers, \apassword \in \passwords, 
		\located(\auser, \aserver),\\
		\knows(\auser,\apassword),
		\stores(\aserver,\apassword)
	}
	{\astate
		\xrightarrow{\eavesdrop(\auser', \auser, \aserver, \apassword)}
		{\astate} \cup \{
		\knows(\auser',\apassword)\}
	}
	\]
	
	In this rule, $\auser'$ is the malicious user.

	\noindent \emph{Initial state.}
	The initial state of our system consists of a user $\alice$ with password 
	$\apassword_{\alice}$ and a server $\aserver$ that accepts passwords 
	$\apassword_{\alice}$. We consider another user $\mallory$ with password 
	$\apassword_{\mallory}$ that is not accepted by $\aserver$. That is to say, 
	our 
	initial state $s_0$ is $\{\knows(\alice, \apassword_{\alice}), \knows(\mallory, 
	\apassword_{\mallory}), \accepts(\aserver, \apassword_{\alice})\}$. 
	
	\noindent \emph{Final states.}
	Lastly, we define the set of final states to be any state where $\mallory$ is 
	logged in the server, i.e. $\finalstates = \{s \in \states| 
	\located(\mallory, \server) \in s\}$. Each of those final states represents 
	a security breach, one where $\mallory$ logged in to the server without being 
	authorised.

	\noindent \emph{Paths.}
	By analysing the system transition rules, one can compute the following 
	three paths that lead to a successful attack, i.e. that allows the 
	system to move from the initial state to a final state. Note that, this is not 
	an exhaustive list of paths, but a list we use for illustration purposes.

	\begin{enumerate}
		\item 
		$s_0$ $\xrightarrow{\exploiting(\mallory,\aserver)}$ 
		$ s_0 \cup \{\stores(\aserver, \apassword_{\mallory})\}$ \\
		$\xrightarrow{\loginuser(\mallory,\aserver)}$ 
		$ s_0 \cup \{\stores(\aserver, \apassword_{\mallory}), 
		\located(\mallory,\aserver)\}$
		\item $s_0$ $\xrightarrow{\bruteforce(\mallory,\aserver)}$ 
		$s_0 \cup \{ \knows(\mallory,\apassword_{\alice}) \}$ \\
		$\xrightarrow{\loginuser(\mallory,\aserver)}$ 
		$s_0 \cup 
		\{\knows(\mallory,\apassword_{\alice}),\located(\mallory,\aserver)\}$
		\item $s_0$ $\xrightarrow{\loginuser(\alice,\aserver)}$ 
		$s_0 \cup \{ \located(\alice,\aserver) \}$ \\
		$\xrightarrow{\eavesdrop(\mallory, \alice, \aserver)}$ 
		$s_0 \cup 
		\{\located(\alice,\aserver),\knows(\mallory,\apassword_{\alice})\}$ \\
		$\xrightarrow{\loginuser(\mallory,\aserver)}$ 
		$s_0 \cup 
		\{\located(\alice,\aserver),\knows(\mallory,\apassword_{\alice}),
		\located(\mallory,\aserver)\}$
	\end{enumerate}

	\subsection{Attacks and their goals}
	
	The output expected from a system specification is a list of attacks and a goal relation. We shall define an \emph{attack} as a sequence of \emph{attack goals}. Because an execution of the system is a sequence of transitions of the type $s
	\xrightarrow{a} s'$ where the action $a$ leads the system into the state $s'$ by removing and adding system properties from $s$, we consider the \emph{goal} of such transition to be exactly those properties that have been added/removed from $s$. Likewise, the goal of a sequence of transitions $s_i 
	\xrightarrow{a_i} s_{i+1} \cdots s_{j-1} 
	\xrightarrow{a_{j-1}} s_j$ consists of adding to and removing from $s_i$ properties in such a way that the system reaches state $s_j$. 
	This leads to the following definition of attacks, goals, and their relation.

	\begin{definition}[Goal relation]
		An \emph{attack goal} is a pair of sets of predicates $(P^{-}, P^{+}) \in 
		\powerset{\predset} 
		\times \powerset{\predset}$ where 
		$P^{-}$ and $P^{+}$ contains the system properties the attacker 
		wishes to remove and add, respectively. We say that an SP graph  $G = 
		(P^{-}_i, 
		P^{+}_i) 
		\seqop 
		\ldots 
		\seqop 
		(P^{-}_j, P^{+}_j)$ obtained from an LTS subpath, 
		$s_i 
		\xrightarrow{a_i} s_{i+1} \cdots s_{j-1} 
		\xrightarrow{a_{j-1}} s_j$, \emph{satisfies} a goal $(P^{-}, 
		P^{+})$, denoted $G \sat (P^{-}, 
		P^{+})$, if 
		the following two conditions 
		hold: 
		\begin{enumerate}
			\item $P^{-} \subseteq 
			s_i \setminus 
			s_j$, meaning that each property in $P^{-}$ has been removed from the 
			system. 
			\item $P^{+} \subseteq s_j \setminus 
			s_i$, meaning
			that every 
			property in $P^{+}$ has been added 
			to the system. 
		\end{enumerate}

	\end{definition}

	Take the following transition as an example.
	
	\begin{align*}
	& \{\knows(\alice, \apassword_{\alice}), 
	\knows(\mallory, 
	\apassword_{\mallory}), \accepts(\aserver, \apassword_{\alice})\} 
	\xrightarrow{\exploiting(\mallory,\aserver)} \\
	& \{\knows(\alice, \apassword_{\alice}), \knows(\mallory, 
	\apassword_{\mallory}), \accepts(\aserver, \apassword_{\alice})	
	\stores(\aserver, \apassword_{\mallory})\}
	\end{align*}
	
	Rather than carrying all the information in the transition, we propose to 
	succinctly express this step by $(\emptyset, \{\stores(\aserver, 
	\apassword_{\mallory})\})$, which describes 
	an attack step 
	where the adversary manages to add the password $\apassword_{\mallory}$ to the 
	server $\aserver$ by launching an exploit. 
	We extend this transformation to subpaths in the 
	straightforward way, which would give the following SP graphs 
	corresponding to the three LTS paths 
	obtained from our network example: 
	
	\begin{enumerate}
		\item 
		$ (\emptyset, \{\stores(\aserver, \apassword_{\mallory})\}) \seqop$ 
		$ (\emptyset, \{\located(\mallory,\aserver)\})$
		\item 
		$(\emptyset, \{ \knows(\mallory,\apassword_{\alice}) \}) \seqop$ 
		$(\emptyset, 
		\{\located(\mallory,\aserver)\})$
		\item 
		$(\emptyset, \{ \located(\alice,\aserver) \}) \seqop$  
		$(\emptyset, 
		\{\knows(\mallory,\apassword_{\alice})\}) \seqop$  
		$(\emptyset, 
		\{\located(\mallory,\aserver)\})$
	\end{enumerate}

	%Because our system models produce paths, namely an interleaved sequence of states and actions, 

	The generation algorithm requires attacks expressed as SP graphs and a goal 
	relation. 
	A straightforward transformation of an LTS path 
	into an SP graph is as follows. 
	A single transition $s
	\xrightarrow{a} s'$ results in the simple SP graph $(s, a, s')$, i.e. in an SP 
	graph with a single vertex labelled $(s, a, s')$. A subpath 
	$s_i 
	\xrightarrow{a_i} s_{i+1} \cdots s_{j-1} 
	\xrightarrow{a_{j-1}} s_j$ 
	is transformed into the SP graph 
	$(s_i, a_i, s_{i+1}) \seqop \ldots \seqop (s_{j-1}, a_{j-1}, s_j)$ where 
	$\seqop$ is the sequential composition operator. 
	In this case, an SP graph label carries the same information as a 
	transition in an LTS path. 
	
	Because SP graph labels become attack tree labels in our attack-tree generation 
	approach, we argue for a transformation that produces smaller labels. In 
	particular, 
	we propose a transformation where each SP graph label focuses on what the 
	attacker 
	has accomplished by making a transition from a state $s$ to another 
	state $s'$, rather than on the entire system state. Given a transition 
	$s 
	\xrightarrow{a} 
	s'$, what the attacker has accomplished is the subtraction of the properties
	$s 
	\setminus s'$ from the system, where $\setminus$ is the set subtraction 
	operator, and the addition of the properties $s' \setminus s$. 
	Therefore, 
	we transform a transition $s 
	\xrightarrow{a} 
	s'$ into a simple SP graph 
	$(s \setminus s', s' \setminus s)$.

	Note that an SP graph over $\powerset{\predset} 
	\times \powerset{\predset}$ may satisfy many goals. In particular, if 
	$G \sat 
	(X, 
	Y)$, then $G \sat (X', 
	Y')$ for every $X' \subseteq X, Y' \subseteq Y$. That is, the fewer predicates 
	in a goal, the more \emph{achievable} it becomes. 
	In particular, $G \sat (\emptyset, \emptyset)$ 
	for every SP graph $G$, which makes $\sat$ a total relation.	
	Further note that none of the SP graphs that 
	we produce from an LTS specification contains 
	parallel composition. This means that attack trees whose semantics is 
	$\attacksfunc{\anlts}$ may contain disjunctive ($\OR$) and 
	sequential conjunctive ($\SAND$) refinements, but not conjunctive 
	refinement ($\AND$). 
	
	We are ready now to establish the main result of this section, which shows that 
	the goal relation above allows for an efficient implementation of the 
	$\textsc{FindOptimalLabel}$. 
	
	\begin{theorem}\label{theo-findlabel}
		Let $\spgraphset = \{g_1, \ldots, g_n\}$ be a set of SP graphs obtained, 
		respectively, from the following LTS subpaths, 
		\begin{align*}
			& \atrace_1 = s_{i_1}^1 
			\xrightarrow{a_{i_1}^1} s_{i_1+1}^1 \cdots s_{j_1-1}^1 
			\xrightarrow{a_{j_1-1}^1} s_{j_1}^1 \\
			& \cdots \\	
			& \atrace_n = s_{i_n}^n 
			\xrightarrow{a_{i_n}^n} s_{i_n+1}^n \cdots s_{j_n-1}^n 
			\xrightarrow{a_{j_n-1}^1} s_{j_n}^n \\
		\end{align*} 
		A goal $b$ satisfying that $\spgraphset \subseteq \attacksfunc{b}$ 
		with $|\attacksfunc{b}|$ minimum, i.e. a solution to the 
		\textsc{FindOptimalLabel} problem on input $(\sat, \spgraphset)$, can be 
		calculated by, 
		\begin{align*}
			b = \left( (s_{i_1}^1 \setminus s_{j_1}^1) \cap \ldots \cap (s_{i_n}^n 
			\setminus s_{j_n}^n),  (s_{j_1}^1 \setminus s_{i_1}^1) \cap \ldots 
			\cap (s_{j_n}^n 
			\setminus s_{i_n}^n) \right)		
		\end{align*}
	\end{theorem}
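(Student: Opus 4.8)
The plan is to characterise the proposed label $b = (P^-, P^+)$, with $P^- = \bigcap_{k=1}^{n}(s_{i_k}^k \setminus s_{j_k}^k)$ and $P^+ = \bigcap_{k=1}^{n}(s_{j_k}^k \setminus s_{i_k}^k)$, as the \emph{componentwise-largest} goal that is satisfied by every graph in $\spgraphset$, and then to exploit the anti-monotonicity of $\attacksfunc{\cdot}$ already observed after the Goal relation definition (namely that $G \sat (X,Y)$ together with $X' \subseteq X$ and $Y' \subseteq Y$ imply $G \sat (X', Y')$). That remark shows that enlarging a goal can only shrink its attack set, so the minimum-cardinality feasible label must be the largest feasible one; the intersection formula is then exactly the supremum of the feasible goals under componentwise inclusion.

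First I would establish \emph{feasibility}, i.e.\ $\spgraphset \subseteq \attacksfunc{b}$. Fix $g_k \in \spgraphset$, obtained from the subpath $\atrace_k$ with endpoints $s_{i_k}^k$ and $s_{j_k}^k$. By the Goal relation definition, satisfaction of a goal by $g_k$ depends only on these two endpoint states. Because an intersection is contained in each of its members, $P^- = \bigcap_{m}(s_{i_m}^m \setminus s_{j_m}^m) \subseteq s_{i_k}^k \setminus s_{j_k}^k$, and likewise $P^+ \subseteq s_{j_k}^k \setminus s_{i_k}^k$, which are precisely the two conditions for $g_k \sat b$. Hence $g_k \in \attacksfunc{b}$ for every $k$, so $b$ is a feasible label.

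Next I would prove \emph{maximality}: any goal $b' = (Q^-, Q^+)$ with $\spgraphset \subseteq \attacksfunc{b'}$ satisfies $Q^- \subseteq P^-$ and $Q^+ \subseteq P^+$. Indeed, $g_k \sat b'$ for each $k$ forces $Q^- \subseteq s_{i_k}^k \setminus s_{j_k}^k$ for every $k$; since $Q^-$ sits below each term of the intersection, it sits below the intersection itself, i.e.\ $Q^- \subseteq P^-$, and symmetrically $Q^+ \subseteq P^+$. Thus every feasible goal is componentwise dominated by $b$.

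Finally I would combine the two steps. For any feasible $b'$, maximality gives $Q^- \subseteq P^-$ and $Q^+ \subseteq P^+$; applying the anti-monotonicity remark at the level of attack sets (the larger goal $b$ is satisfied by no more attacks than the smaller goal $b'$) yields $\attacksfunc{b} \subseteq \attacksfunc{b'}$, whence $|\attacksfunc{b}| \le |\attacksfunc{b'}|$ under the standing finiteness assumption on the attack set. Since $b$ is itself feasible, it attains the minimum, which is exactly the claim. The set-theoretic manipulation of $\bigcap$ is routine; the one delicate point is the \emph{direction} of the monotonicity, and the genuinely conceptual step is recognising that minimising $|\attacksfunc{b}|$ over feasible labels is equivalent to maximising the goal under componentwise inclusion. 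This equivalence is what makes the closed-form intersection the correct answer, and it rests on $\attacksfunc{\cdot}$ being order-reversing, which in turn relies on satisfaction depending only on the subpath endpoints rather than on the intermediate states.
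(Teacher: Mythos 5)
Your proposal is correct and follows essentially the same route as the paper's proof: both rest on the anti-monotonicity of the goal relation (if $g \sat (X,Y)$ and $X' \subseteq X$, $Y' \subseteq Y$, then $g \sat (X',Y')$, hence $\attacksfunc{(X,Y)} \subseteq \attacksfunc{(X',Y')}$) and on the observation that any common goal is componentwise contained in the stated intersections, so that the intersection goal has the smallest attack set. Your version is in fact slightly more complete than the paper's, which argues minimality by contradiction but leaves the feasibility step $\spgraphset \subseteq \attacksfunc{b}$ implicit, whereas you verify it explicitly.
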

	\begin{proof}
		Recall that $\attacksfunc{b}$ contains all SP graphs that satisfy $b$. 
		The proof is based on the following property of the goal relation: if $g 
		\sat 
		(X, 
		Y)$,  $X' \subseteq X$ and $Y' \subseteq Y$, then $g \sat (X', 
		Y')$. A corollary of this result is that $\attacksfunc{(X, Y)} \subseteq 
		\attacksfunc{(X', Y')}$.

		Assume there exists another 
		common goal $b' = (P^{-}, P^{+})$ with $b' <_{\spgraphset} b$. Because $b'$ is satisfied by all SP graphs in 
		$\spgraphset$, it must be 
		the case that $P^{-} \subseteq 
		(s_{i_k}^k \setminus s_{j_k}^k)$ and  $P^{+} \subseteq 
		(s_{j_k}^k \setminus s_{i_k}^k)$ for every $k \in \{1, \ldots, n\}$, which 
		gives $P^{-} \subseteq (s_{i_1}^1 \setminus s_{j_1}^1) \cap \ldots 
		\cap (s_{i_n}^n 
		\setminus s_{j_n}^n)$ and $P^{+} \subseteq (s_{j_1}^1 \setminus s_{i_1}^1) 
		\cap \ldots 
		\cap (s_{j_n}^n 
		\setminus s_{i_n}^n)$. This gives that 
		$\attacksfunc{b} \subseteq \attacksfunc{b'}$, which gives that    
		$\attacksfunc{b} \setminus \spgraphset \subset  
		\attacksfunc{b'} \setminus \spgraphset$, which contradicts $b' <_{\spgraphset} b$. 
	\end{proof}
	
	The $\textsc{FindOptimalLabel}$ procedure derived in 
	Theorem \ref{theo-findlabel} calculates $2n$ set intersections. In general, 
	that would result in a worst-cased complexity $\mathcal{O}(n\cdot 
	|\predset|)$. In practice, however, we argue this algorithm is much more 
	efficient as labels only contain changes in the 
	system states, hence they should be relatively small in comparison to the 
	system states. 
	
	We finally have all the ingredients necessary to showcase our attack-tree 
	generation 
	approach by using the output of the LTS specification from our running example on the input of the three SP 
	graphs listed above. 
	The result is displayed by Figure 
	\ref{fig-generated-tree-example}.  When drawing the tree we 
	removed unnecessary 
	information to make the labels easier to read. We also added to the leaf nodes' 
	labels 
	the name of the action of the transition used to obtain the leaf node's label. 
	Even though actions are not part of an attack goal, they can be carried as 
	metadata to improve the readability of the tree.

	\begin{figure}
		\centering
		\includegraphics[width=1.2\textwidth, 
		angle=0]{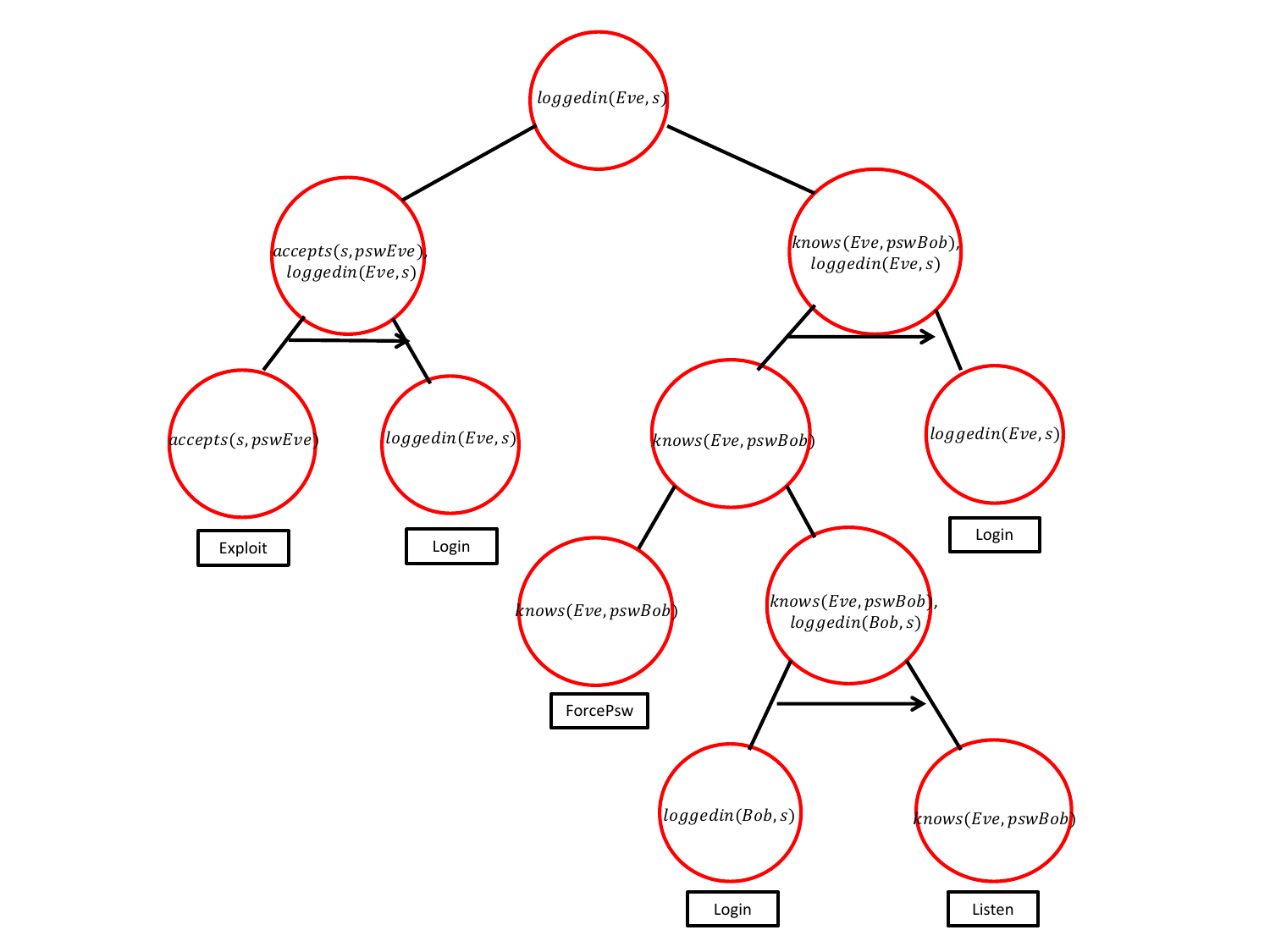}
		\caption{A graphical representation of the tree generated by our approach 
			on input the three SP graphs produced from our network example. 
			\label{fig-generated-tree-example}
		}
	\end{figure}	
	
	The resulting tree expresses the following attack scenario. An attack occurs 
	when a malicious user $\mallory$ logged into the server. $\mallory$ can do so 
	in two ways. On the one hand, $\mallory$ can include her own password in the 
	server, which she accomplishes via an exploit. On the other hand, $\mallory$ 
	may 
	learn $\alice$'s 
	password, which she accomplishes in two ways: by brute-forcing $\alice$'s 
	password or eavesdropping 
	on the communication 
	between $\alice$ and the server. The latter requires $\alice$ to log in to the 
	server. 
	
	Notice how the produced tree provides \emph{just} the necessary information 
	needed to understand the attacker's goal at each level of the tree. For 
	example, the 
	root's children on the right says that $\mallory$ 
	needs to learn $\alice$'s password, but it does not tell how. The deeper the 
	analyst goes into the tree, the more details about the attack they obtain.

\section{Conclusions}\label{sec-conclusion}
We have introduced 
the first attack-tree generation framework
that optimises the labelling and shape of the generated trees, while 
guaranteeing their soundness and correctness with respect to a system 
specification. 
The core of the 
article  
is dedicated to describing and proving correct the various algorithms involved 
in the generation approach. Particularly relevant is our reduction of the tree 
minimisation problem to 
the factorisation problem of algebraic 
expressions, which calls for further research on the factorisation of 
(commutative) semirings without multiplicative 
identity. 
We showed the feasibility of our approach by introducing a system model that 
couples well with our generation framework, and validated the entire approach 
via a small running example. 
Future work will be directed towards 
interfacing the generation framework with a model checker tool, such as 
mCRL2\footnote{https://www.mcrl2.org/web/index.html}, to allow for a 
large-scale validation of the approach. Such validation may include a study of whether the objective functions of our attack-tree generation problem, namely minimisation of size and information loss, do lead to trees that are easier to read and comprehend.  
We also plan to extend our approach to include defenses too, resulting into an attack-defense tree model.

	\appendix
	
	\section*{Appendix}
	\def\block{\par\noindent{\bf Theorem \ref{theo-cover-reduction}.\ } 
\ignorespaces}
\def\endtheorem{}

\begin{block}
	The set 
	cover problem is polynomial-time reducible to the 
	problem of finding the smallest partition $\{\spgraphset_1, \ldots, 
	\spgraphset_{k}\}$ of a set of graphs $\spgraphset$ such that, for every $i 
	\in 
	\{1, \ldots, 
	k\}$, 
	$\spgraphset_i$ 
	is an homogeneous set of SP graphs and there exists a label 
	$\abasicact$ such that $\forall g \in \spgraphset_i \colon g \sat 
	\abasicact$.
\end{block}
\begin{proof}
	First, we describe the set cover problem. 
	Let $U$ be a set 
	of 
	elements and 
	$\mathcal{S} = \{S_1, 
	\ldots, 
	S_m\}$ subsets 
	of $U$. A set $\mathcal{C} \subseteq \mathcal{S}$ is said to be a cover if 
	$\bigcup_{S \in \mathcal{C}}S 
	= U$. The set cover problem consists in finding a cover $\mathcal{C}$
	of minimum 
	cardinality. 
	
	Without loss of generality, assume $U = \{g_1, \ldots, g_n\}$ is a 
	homogeneous set of SP graphs. Likewise, we shall consider 
	$\{S_1, 
	\ldots, 
	S_m\}$ to be labels of \SAND{} trees. 
	Build the goal relation $\sat$ as follows:
	\[
	g_i \sat S_j \iff g_i \in S_j
	\]

	First, let us show how to obtain, given a cover 
	$S_{i_1}, \ldots, S_{i_{k}}$, a partition $\{\spgraphset_1, \ldots, 
	\spgraphset_{k}\}$ of $U$ such that, for every $i 
	\in 
	\{1, \ldots, 
	k\}$, 
	there exists a label 
	$S_j$ such that $\forall g \in \spgraphset_i \colon g \sat 
	S_j$. The algorithm works as follows.
	From $j = 1$ to $j = k-1$, let $\spgraphset_j$ be the set containing all 
	elements in $S_{i_j}$ that do not appear in the remaining sets 
	$S_{i_{j+1}}, 
	\ldots, S_{i_{k}}$. Let $\spgraphset_k = S_{i_k}$. 
	It follows that $\{\spgraphset_1, \ldots, 
	\spgraphset_{k}\}$ is a partition of $U$. Now, because $\spgraphset_j 
	\subseteq S_{i_j}$, we obtain that 
	$\forall g \in \spgraphset_j \colon g \sat 
	S_{i_j}$. 
	% 	This finishes the first proof step.
	
	% 	a partition satisfying the conditions 
	% 	stated from an a 
	% 	cover. Consider cover $S_{i_1}, \ldots, S_{i_{k'}}$ 
	% 	of $U$ with 
	% 	$k' 
	% 	< 
	% 	k$. By definition, 
	% 	for every $j \in \{1, \ldots, k'\}$, it follows that $g \sat S_{i_{j}}$ 
	% 	for every $g \in S_{i_j}$. Therefore, $S_{i_1}, \ldots, S_{i_{k'}}$ is 
	%a 
	% 	cover that 
	% 	satisfies the condition of the partition problem. To make it a 
	%partition 
	% 	rather than a cover, we just need to remove redundant elements until a 
	% 	partition is obtained. Such a process does not break the property that 
	%all 
	% 	elements in $S_{i_j}$ share a common goal, leading to a solution with 
	%size 
	% 	$k' < k$, which is a contradiction. 

	Second, let us show how to obtain a solution to the set cover 
	problem,  
	given a 
	partition $\{\spgraphset_1, \ldots, 
	\spgraphset_{k}\}$ of minimum cardinality 
	of $U$ such that, for every $i 
	\in 
	\{1, \ldots, 
	k\}$, 
	there exists a label 
	$S_j$ such that $\forall g \in \spgraphset_i \colon g \sat 
	S_j$. 
	% 	That is, $\{\spgraphset_1, \ldots, 
	% 	\spgraphset_{k}\}$ is a 	
	% 	solution to the partition problem which we wish to relate to the cover 
	%set 
	% 	problem. 
	Let 
	$f\colon \{1, \ldots, k\} \rightarrow \{1, \ldots, m\}$ be a total mapping 
	satisfying that,  for 
	every $i \in \{1, \ldots, k\}$, $g 
	\in \spgraphset_i \implies g \sat S_{f(i)}$. Note that there exists at 
	least one such mapping.
	Because $g \sat S_{f(i)} \implies g \in S_{f(i)} $, we obtain that 
	$\spgraphset_i \subseteq S_{f(i)}$. Therefore, $S_{f(1)}, \ldots, S_{f(k)}$ 
	is a covering of $U$. 
	
	We conclude the proof by showing that $S_{f(1)}, \ldots, S_{f(k)}$ is a 
	minimum 
	cover. 
	% 	To do so, we first need to prove that $f$ is one-to-one. 
	%  	Assume, 
	% 	without loss of 
	% 	generality, that $f(1) = f(2)$. This means that the set 
	%$\{\spgraphset_1 
	% 	\cup 
	% 	\spgraphset_2, \spgraphset_3, \ldots, \spgraphset_k\}$ is a feasible 
	% 	solution 
	% 	to the 
	% 	partition problem, which contradicts the premise that $\{\spgraphset_1, 
	% 	\ldots, 
	% 	\spgraphset_{k}\}$ is a partition of minimum cardinality. 
	% 	Now, 
	This follows from our first proof step, which shows that a cover $S_{i_1}, 
	\ldots, S_{i_{k'}}$ of $U$ leads to a partition of 
	cardinality $k'$. That is, the existence of a smaller cover implies the 
	existence of a smaller partition, which contradicts the 
	premise that $\{\spgraphset_1, 
	\ldots, \spgraphset_k\}$ has minimum 
	cardinality. 
	% 	
	% 	By definition, 
	% 	for every $j \in \{1, \ldots, k'\}$, it follows that $g \sat S_{i_{j}}$ 
	% 	for every $g \in S_{i_j}$. Therefore, $S_{i_1}, \ldots, S_{i_{k'}}$ is 
	%a 
	% 	cover that 
	% 	satisfies the condition of the partition problem. To make it a 
	%partition 
	% 	rather than a cover, we just need to remove redundant elements until a 
	% 	partition is obtained. Such a process does not break the property that 
	%all 
	% 	elements in $S_{i_j}$ share a common goal, leading to a solution with 
	%size 
	% 	$k' < k$, which is a contradiction. 
	%	\qed
	It is worth remarking that this generalises the proof provided in our 
		previous 
		work \cite{GJMTW17}, as it imposes no restriction on the attacks, nor 
		on 
		the goal
		relation. 
	
\end{proof}

\end{document}